\title{Ornstein Isomorphism and Algorithmic Randomness}
\author[1]{Mrinalkanti Ghosh \thanks{mkghosh@ttic.edu}}
\author[2]{Satyadev Nandakumar \thanks{satyadev@cse.iitk.ac.in}}
\author[3]{Atanu Pal \thanks{palatanu@cse.iitk.ac.in}}
\affil[1]{Toyota Technological Institute at Chicago\\Chicago IL
  60637\\USA} 
\affil[2]{Department of Computer Science and Engineering\\
Indian Institute of Technology Kanpur\\Kanpur, UP, 208016, India} 
\affil[3]{Strand Genomics\\Bangalore\\India}
\newcommand{\ee}{\ensuremath{\mathrm e}}
\newcommand{\ve}{\ensuremath{\varepsilon}}
\newcommand{\rrr}{\ensuremath{\mathbb{R}}}
\renewcommand{\bar}[1]{\ensuremath{\overline{#1}}}
\let\nfrac=\nicefrac
\newcommand{\dotprod}[2]{\ensuremath{\left\langle #1, #2 \right\rangle}}
\newcommand{\ie}{i.e.,\xspace}
\def\Pr#1{%fancy probability command
  \@ifnextchar\bgroup%
   {\prwithdist{#1}}
   {\singlepr{#1}}
}
\def\singlepr#1{%
   \ensuremath{\mathbb{P}\left[ #1 \right]}
}
\def\prwithdist#1#2{%
   \@ifnextchar\bgroup
   {\superfancypr{#1}{#2}}
   {\ensuremath{\mathbb{P}_{#1}\left[ #2 \right]}}
}
\def\superfancypr#1#2#3#4{
   \ensuremath{\mathbb{P}_{#1}\left#3 #2 \right#4}
}
\def\Ex#1{%fancy expextation command
  \@ifnextchar\bgroup%
   {\twoargsEx@aux{#1}}
   {\oneargEx@only{#1}}
}
\def\oneargEx@only#1{%
   \ensuremath{\mathbb{E}\left[ #1 \right]}
}
\def\twoargsEx@aux#1#2{%
   \@ifnextchar\bgroup
   {\superfancyex{#1}{#2}}
   {\ensuremath{\mathbb{E}_{#1}\left[ #2 \right]}}
}
\def\superfancyex#1#2#3#4{
   \ensuremath{\mathbb{E}_{#1}\left#3 #2 \right#4}
}
\def\var#1{%fancy varience command
  \@ifnextchar\bgroup%
   {\varwithdist{#1}}
   {\singlevar{#1}}
}
\def\singlevar#1{%
   \ensuremath{\mathrm{Var}\left[ #1 \right]}
}
\def\varwithdist#1#2{%
   \@ifnextchar\bgroup
   {\superfancyvar{#1}{#2}}
   {\ensuremath{\mathrm{Var}_{#1}\left[ #2 \right]}}
}
\def\superfancyvar#1#2#3#4{
   \ensuremath{\mathrm{Var}_{#1}\left#3 #2 \right#4}
}
\DeclareMathOperator{\bigomega}{\operatorname \Omega}
\DeclareMathOperator{\bigoh}{\operatorname O}
\newtheorem{theorem}{Theorem}
\newtheorem*{kstheorem}{The Kolmogorov-Sinai Theorem}
\newtheorem{maintheorem}[theorem]{(Main) Theorem}
\newtheorem{lemma}[theorem]{Lemma}
\newtheorem{corollary}[theorem]{Corollary}
\theoremstyle{definition}
\newtheorem{definition}[theorem]{Definition}
\newtheorem{example}[theorem]{Example}
\newcommand{\R}{\mathbb{R}}
\newcommand{\X}{(X, \mathcal{B}, \mu, T)}
\newcommand{\Y}{(Y, \mathcal{C}, \nu, S)}
\newcommand{\A}{\ensuremath{\mathcal{A}}\xspace}
\newcommand{\B}{\ensuremath{\mathcal{B}}\xspace}
\newcommand{\C}{\ensuremath{\mathcal{C}}\xspace}
\newcommand{\F}{\ensuremath{\mathcal{F}}\xspace}
\newcommand{\G}{\ensuremath{\mathcal{G}}\xspace}
\newcommand{\N}{\mathbb{N}}
\newcommand{\Z}{\mathbb{Z}}
\newcommand{\exampleend}{\hfill\ensuremath{\qed}}
\begin{document}
\maketitle

\begin{abstract}
In 1970, Donald Ornstein proved a landmark result in dynamical
systems, \emph{viz.}, two Bernoulli systems with the same entropy are
isomorphic except for a measure 0 set \cite{Ornstein70}. Keane and
Smorodinsky \cite{KeaneSmorodinsky79} gave a finitary proof of this
result. They also indicated how one can generalize the result to
mixing Markov Shifts in \cite{KeaneSmorodinsky79b}. We adapt the
construction given in \cite{KeaneSmorodinsky79} to show that if two
computable mixing Markov systems have the same entropy, then there is
a Schnorr layerwise lower semicomputable isomorphism defined on all
Schnorr random points in the system. Since the set of Schnorr random
points forms a larger set than the set of Martin-L\"of random points,
which is a measure 1 set, it implies the classical result for such
systems.

This result uses several recent developments in computable analysis
and algorithmic randomness. Following the work by Braverman
\cite{Braverman05}, Nandakumar \cite{Nan08}, and Hoyrup and Rojas
\cite{HoyRoj09} introduced discontinuous functions into the study of
algorithmic randomness. We utilize Hoyrup and Rojas' elegant notion of
layerwise computability and Miyabe's definition of Schnorr integrable
tests \cite{Miyabe13} to produce the test of randomness in our
result. %% Further, we use the recent result of the effective
%% Shannon-McMillan-Breiman theorem, independently established by
%% Hochman \cite{Hochman09} and Hoyrup \cite{Hoyrup12} to prove the
%% properties of our construction.

We show that the result cannot be improved to include all points in
the systems - only trivial computable isomorphisms exist between
systems with the same entropy.
\end{abstract}

\section{Introduction}
In the Kolmogorov program for algorithmic randomness, Martin-L\"of
established that there is a smallest constructive measure 1 set, whose
objects are the set of individual random objects. Every effectively
computable probabilistic law, \emph{i.e.}  law which holds with
probability 1, specifies a ``majority rule''. Thus it is reasonable to
ask if every such law is satisfied by every individual random
object. This will \emph{a fortiori} imply the classical theorem, since
the set of random objects has probability 1. The effective versions
have more intuitive content, since they show that if any object fails
the particular law, then there is an algorithm which can ``bet'' and
win unbounded amounts of money on it.

Indeed, very general theorems like the Strong Law of Large Numbers
\cite{vanLambalgen}, the Law of Iterated Logarithm \cite{Vovk87}, and
Birkhoff's Ergodic Theorem \cite{Vyugin97}, \cite{Nan08},
\cite{GHR09b}, \cite{BDHMS12} have been effectivized. Prior to the
work of Braverman \cite{Braverman05}, only continuous functions were
considered. Following the work of Braverman, Nandakumar \cite{Nan08}
and Hoyrup and Rojas \cite{HoyRoj09a} have considerably broadened the
class of functions to deal with discontinuities, which has led to
considerably general theorems on the ergodic properties of random
objects in Bienvenu et al., and \cite{BDHMS12}, Franklin, Greenberg,
Miller and Ng \cite{FGMN12}. Recently, Hochman \cite{Hochman09} and
Hoyrup \cite{Hoyrup12} independently resolved the long-standing open
problem of the effectivization of the Shannon-McMillan-Breiman
theorem.

In a recent line of work, G\'{a}cs \cite{Gacs05}, and G\'{a}cs, Hoyrup
and Rojas \cite{GHR11}, \cite{HoyRoj09a}, \cite{HoyRoj09} have
extended the field of study of randomness to fairly general spaces
other than the finite alphabet spaces which have traditionally formed
the subject of algorithmic randomness. This also enables us to study
the \emph{relationships between} the random objects of different
probability spaces. In this paper, we utilize this theory to study
measure-preserving isomorphisms between effective dynamical
systems. We prove an effective version of the celebrated Ornstein
Isomorphism Theorem\cite{Ornstein70}, by adapting the finitary proof
of Keane and Smorodinsky \cite{KeaneSmorodinsky79}.

Consider two dynamical systems $\X$ and $\Y$\footnote{definitions in
  Section \ref{subsecn:kse}} where $X,Y$ are the sample space,
$\mathcal{B},\mathcal{C}$ the $\sigma$-algebras, $\mu$, and $\nu$ the
probabilities, and $T$ and $S$ the measure-preserving transformations
on $X$ and $Y$ respectively. A map $\phi: X \to Y$ is a factor map if
$\phi T(x) = S \phi (x)$ for almost every $x \in X$. If $\phi$ is
invertible then we say that $X$ and $Y$ are isomorphic. Isomorphisms
help us to categorize dynamical systems into classes of systems which
are essentially ``encodings'' of another system.

Kolmogorov and Sinai \cite{Kolmogorov:NITDS}, \cite{Sinai59}
introduced the notion of the \emph{entropy} of a dynamical system as
an invariant of an isomorphism. They showed that if two systems are
isomorphic to each other, then they have the same
\emph{Kolmogorov-Sinai} entropy. Ornstein and Weiss
\cite{OrnsteinWeiss07} show that this was a crucial insight -- in a
very broad sense, the Kolmogorov-Sinai entropy is the only invariant
of the isomorphism. The Kolmogorov-Sinai theorem brought a fresh
perspective to the study of dynamical systems. Formally, it justifies
viewing purely deterministic dynamical systems as having positive
entropy \cite{CFS} -- thus some deterministic systems can be viewed as
``random''.

The converse of the result, \emph{viz.} that systems with the same
Kolmogorov-Sinai entropy are isomorphic to each other, does not hold
in general (see Billingsley \cite{Billingsley65}). However, Ornstein
showed in a celebrated result, that if we restrict the systems to the
broad class of ``Bernoulli systems'', then equal entropy systems are
isomorphic to each other. Ornstein generalized this result to hold on
the class of ``finitely determined systems''. Numerous examples of
deterministic dynamical systems are isomorphic to the Bernoulli
system, which is intuitively the most random system possible. (For a
recent survey, see Ornstein \cite{Ornstein13}.)

However, the isomorphism Ornstein constructs is not continuous (it
cannot be continuous in general \cite{Petersen89}) and is not directly
amenable to the theory of algorithmic randomness. In 1979, Keane and
Smorodinsky gave a \emph{finitary} version of Ornstein isomorphism
theorem. A map is called \emph{finitary} if it is continuous except on
a measure 0 set. The concept involves viewing the underlying systems
as both probability and topological spaces. We adapt this proof to
establish our result.

Our main result of the paper is the following:
% First, we show that there is a layerwise computable isomorphism
% defined on all Schnorr random points. (for a definition, see
% section \ref{sub-sec:algorithmic_randomness}.)
\begin{maintheorem}
If two effective mixing Markov systems have the same Kolmogorov-Sinai
entropy, then there is a ``layerwise computable'' isomorphism which is
defined on all Schnorr random objects of both the systems.
\end{maintheorem}

Hoyrup and Rojas \cite{HoyRoj09} have shown that layerwise computable
functions can be used to characterize Schnorr randomness. Hence
the above theorem will establish that there is an isomorphism which is
defined between the sets of Schnorr random objects in the two
systems. 

%The paper is organised as follows: In section 2, we explain the
%necessary backgrounds. In section 3 we describe the notations for the
%purpose of Theorem 1 and in section 4 we develop a proof of Theorem
%1.

Further, in Section \ref{secn:computable_transformation}, we show that
this cannot be improved substantially -- if we insist on a computable
transformation which is defined on all points, then we have no
non-trivial isomorphism.

This work crucially employs the concept of layerwise computability,
which affords us the luxury of ignoring uncomputability of a function
on a large set of discontinuities. Our construction will diverge on
many non-random points. (For example, if a computable point $x$ has
only finitely many zeroes in its ``encoding'', then our map is
undefined at that point.)  This is an important difference from the
result of Keane and Smorodinsky (see Theorem 17 of \cite{KS:CFC}),
where the points of divergence of the construction are immaterial. We
show that for every Schnorr random object, the adapted
Keane-Smorodinsky construction converges -- in particular, in a
layerwise computable manner. Consequently there is a pointwise
isomorphism between the set of random objects in the two systems.

\section{Assumptions and Notations}
In this section we describe our notations for the proof developed in
section 4. In order to facilitate easy detection of parallel
constructs and differences between our proof and that of Keane and
Smorodinsky \cite{KeaneSmorodinsky79}, we closely follow notations of
the exposition in Chapter 6 of Petersen \cite{Petersen89}.

We are given two finite alphabet stationary mixing Markov systems
$\A~=~(\left(\Sigma_A\right)_{-\infty}^{\infty},P_A,T_A)$ and
$\B~=~(\left(\Sigma_B\right)_{-\infty}^{\infty},P_B,T_B)$ on alphabet
sets $\Sigma_A$ and $\Sigma_B$ respectively, with equal entropy.
%%-----------------
%% SN : Commented for almost-Markov systems
%%------------------ 
%% Let $m$ be the maximum memory of the two Markov processes.
Note that all the conditional probabilities are bounded away from $0$
or $1$.

Let $\varepsilon_r$ denote $\varepsilon_r=\frac{1}{2^r}$ for any
natural number $r$. We assume that the probabilities of the given
systems are computable. To be precise, we assume that we have a Turing
machine $M_A$ for the system \A (and $M_B$ for \B) so that given a
string $x\in\Sigma_A^*$ (correspondingly, $x\in\Sigma_B^*$) and a
natural number $n$, $M_A(x,n)$ ($M_B(x,n)$ for \B) returns a rational
number approximating the probability of a cylinder $x$ within
$\varepsilon_nP_A(x)$ of $P_A(x)$ ($\varepsilon_n\cdot P_B(x)$ for
\B). We denote this approximation by $P_A(x,n)$ and $P_B(x,n)$
respectively. Note that, since the dynamical systems are assumed to be
stationary, we do not care about the position of the
cylinder.\footnote{There is little difference between the requirements
  of having additive error of $\varepsilon_n$ and additive error of
  $\varepsilon_n\cdot P_A(x)$, except that the later is more
  convenient for our purpose.}

Given a probability vector $P$, we denote its entropy as $H(P)$. From
the above assumption, we can infer that the entropy of the systems is
computable, i.e., we have a Turing machine $M$, which on input $n$,
gives a ${\varepsilon_n}$ approximation of the entropy $H$.

%% TeX-Master=markov.tex

\section{Overview of the construction}
First, we reduce the problem of construction of isomorphism between
two mixing Markov systems of equal entropy to one where two systems
have a common probability weight. We call this the Marker Lemma,
analogous to Keane and Smorodinsky. Our construction differs in that
all our systems are mixing Markov systems, unlike the Bernoulli
systems in \cite{KeaneSmorodinsky79}. This lemma allows us to assume,
without loss of generality, that the symbol 0 has identical
probability in the two systems.

A remark is due here about a false lead -- it may appear that if such
an intermediate construction succeeds, we can iterate the construction
and construct an isomorphism between the alphabets which \emph{a
  fortiori} yields a pointwise measure-preserving isomorphism. This is
not possible in general because the non-trivial cases of Ornstein
isomorphism are precisely when $|\Sigma_A|\ne|\Sigma_B|$, and we reach
an impasse when we have an odd number of symbols in one alphabet, and
an even number of symbols in the other.

Then, we construct an isomorphism between the random objects in two
mixing Markov systems \A and \C with equal entropy and with identical
probability for 0, in stages. First, for a random object $x$, we call
the pattern of 0s with all other symbols replaced by
$\text\textvisiblespace$ as the skeleton of $x$.  For $x \in \A$, we
identify potential images as those sequences $y \in \C$, which have
identical skeletons.  This is enabled by the effective
Skeleton Lemma. This is the first step to identify potential images of
$x$ under the isomorphism. We now restrict the choices available
progressively, until we remain with a unique image for $x$, through
the following stages.

Once we have identified sequences in \A and \C with identical
skeletons, we have to ``fill in'' the non-zero positions by producing
a measure-preserving bijection between equal length strings from the
two systems. The definition and technical results about these strings
form the ``effective filler lemma''. In this stage, we identify
``filled-in'' strings from \A and \C which could potentially be
isomorphically mapped to each other. The existence of strings in the
two systems with simultaneously the same length and approximately the
same entropy is a consequence of the asymptotic equipartition
property. This portion of our proof varies in an essential manner from
that of Keane and Smorodinsky.

This potential mapping between the strings of \A and \C can be
naturally modeled as a bipartite graph. Finally, we prove a version of
the Marriage Lemma to form the bijection between the strings in the
two sequences, which forms a basis for the construction of the
layerwise computable bijection between the two systems. In the limit,
we will map every random infinite sequence $x$ in the first system to
a unique random infinite sequence $y$ from the second and vice
versa. We will justify that the overall construction is a layerwise
computable function.

\subsection{Relevance of the Assumptions}
We crucially use the notion of Schnorr layerwise computable functions
from the theory of algorithmic randomness. Further, instead of the
effective Shannon-McMillan-Breiman theorem which holds for
Schnorr random points, we use the asymptotic equipartition
property of mixing Markov chains. We now broadly justify the
appropriateness of these assumptions.

Our algorithm relies on the fact that for any point in the support of
the isomorphism, we can find skeletons of any given rank. This is true
for all Schnorr random points, which is crucial in ensuring that our
construction is Schnorr layerwise computable. On the other hand, for
several computable points -- for instance, for periodic sequences,
skeletons of only finitely many lengths occur. Thus the set of points
where our algorithm diverges is \emph{dense}. Hence it seems difficult
to adapt topologically inspired notions of discontinuous functions
like that of Braverman \cite{Braverman05} or Nandakumar \cite{Nan08}
for our purpose, and measure-theoretic notions of computable
discontinuous functions like layerwise computability are considerably
more natural to deal with.

Second, the filler lemma for finding fillers for the skeleton relies
on the fact that \emph{for every Schnorr random point}, we can find
filler strings satisfying a certain entropy bound. The classical
Shannon-McMillan-Breiman theorem gives us only an almost everywhere
behavior which leaves the possibility that the construction may fail
for \emph{a nonempty measure 0 subset} of Schnorr random points. The
effective Shannon-McMillan-Breiman theorem of Hochman \cite{Hochman09}
and Hoyrup \cite{Hoyrup12} provides the assurance that we can find
such fillers for \emph{every} Martin-L\"of random point. However, we
need the stronger assurance that the fillers will exist for every
Schnorr random. In order to do this, we have to work directly with the
asymptotic equipartition property for mixing Markov chains. Even
though the classical property is a weaker version of the
Shannon-McMillan-Breiman theorem, here, this version gives us
sufficiently precise estimates for the Schnorr layerwise computable
function.

% Now, we proceed with the stages of the algorithm.

\section{Preliminaries}
In this section, we briefly explain the definition of concepts and
notation which we use in our result. First, we introduce the
background from dynamical systems, and second, that from algorithmic
randomness. 

\subsection{Kolmogorov-Sinai Entropy}
\label{subsecn:kse}
Kolmogorov \cite{Kolmogorov53} and Sinai \cite{Sinai59} introduced the
notion of the entropy of a \emph{transformation}, analogous to Shannon
entropy, which proved a fruitful tool in the classification of
dynamical systems. This notion is, in an essential sense, the only
invariant of a dynamical system -- all other natural invariants are
continuous functions of the entropy \cite{OrnsteinWeiss07}. We now
describe the notion of Kolmogorov-Sinai entropy.

A probability space is a triple $(X, \mathcal{B}, \mu)$, where $X$ is
a sample space, $\mathcal{B}$, a $\sigma$-algebra on $X$, and $\mu$, a
probability distribution on $\mathcal{B}$. Let $T: X \to X$ be a
measurable map. The transformation $T$ is called
\emph{measure-preserving} if for any measurable set $B \in
\mathcal{B}$, $\mu(T^{-1}B) = \mu(B)$. A measure-preserving map $T$ is
called an \emph{ergodic map} if every set $B \in \mathcal{B}$ where
$T^{-1}B = B$ has measure either 0 or 1.

\begin{definition}
A quadruple $(X, \mathcal{B}, \mu, T)$ where $(X, \mathcal{B}, \mu)$
is a probability space and $T: X \to X$ is an ergodic map, is called a
\emph{dynamical system}.
\end{definition}

We now proceed to the definition of entropy of a dynamical system. The
chief idea is to introduce a notion analogous to a finite alphabet.
Given any dynamical system $\X$, we can associate it with a process
involving finitely many states. Let $\alpha = (A_1, A_2, \dots, A_n)$
be a finite collection of measurable subsets of $X$ which are pairwise
disjoint except for measure 0 sets, and cover $X$ except possibly for
a measure 0 set. We can think of the partition containing $x \in X$ as
its $0^{\text{th}}$ ``character'' -- that is, if $x \in A_i$, then we
write $x[0] = i$.

The \emph{entropy of a partition $\alpha$} is defined to be $H(\alpha)
= - \sum_{i=1}^{n} \mu (A_i) \log_2 \mu(A_i)$. Then for any integer i,
$T^{-i} \alpha$ is the set $(T^{-i}(A_1), \dots, T^{-i}(A_n))$. This
set also partitions $X$, since $T$ is a measure-preserving
transformation. Now, we need to define concepts analogous to
``subsequences''. For this, we introduce the notion of refinement of
partitions.

If $\alpha = (A_1, \dots, A_n)$ and $\beta = (B_1, \dots, B_m)$ are
two partitions of $X$, then the \emph{join of the partitions}, $\alpha
\vee \beta$ is defined to be the partition
$$(A_i \cap B_j \mid i=1,\dots, n \;;\;j=1, \dots, m).$$

For any sequence of integers $i_1, \dots, i_k$, we then consider the
``least common refinement'' $\alpha[-k+1 \dots 0]$, denoted $\alpha
\vee T^{-1} \alpha \vee \dots \vee T^{-k+1}\alpha$. \footnote{The
  convention of starting from negative indices is standard in the
  literature on dynamical systems.} For any point $x \in X$, the cell
containing $x$ in this refinement represents the characters in
the positions $-k+1, \dots, -1, 0$.

Using this, for any $k \in \N$, we define the $k$-entropy of the
system as $H_k(\alpha) = \frac{1}{k} H(\alpha \vee T^{-1}\alpha \vee
\dots \vee T^{-k+1}\alpha)$, which represents the average entropy rate
of the letters $x[-k+1 \dots 0]$ of any point $x \in X$. Finally, the
asymptotic rate of entropy induced by the partition $\alpha$ is
defined $\lim_{k \to \infty} H_k(\alpha)$. This limit exists for every
stationary, in particular, ergodic systems.

\begin{definition}
The \emph{entropy} of the ergodic system $(X, \mathcal{B}, \mu, T)$
\emph{with respect to the partition $\alpha$} is $h(\alpha, T) =
\lim_{k \to \infty} \frac{1}{k} H_k(\alpha)$.

Let $\Pi(X)$ denote the set of all finite partitions of $X$. The
\emph{Kolmogorov-Sinai entropy of the transformation} $T$ is defined
to be
\begin{align}
\label{kse}
h(T) = \sup_{\alpha \in \Pi(X)} h(\alpha, T).
\end{align}
\end{definition}

The supremum in (\ref{kse}) is not easy to compute in
general. However, there is a case where the supremum is attained by a
fairly simple partition $\alpha$. We say that $\alpha$ is a
\emph{generator} of $\X$ if $\alpha \vee T^{-1} \alpha \vee \dots =
\mathcal{B}$ -- that is, if $\alpha$ generates the full
$\sigma$-algebra $\mathcal{B}$.  In this case, we have the famous
Kolmogorov-Sinai theorem.

\begin{kstheorem}{\cite{Kolmogorov53}, \cite{Sinai59}}
If $\alpha$ is a generator with respect to $T$, then $h(\alpha,T) =
h(T)$.
\end{kstheorem}

This has the consequence that for computable dynamical systems with a
computable generator, the entropy is computable. For a given dynamical
system, from now on, we will assume that a generating partition is
given and thus we can view the dynamical system as an alphabet process
with left shift being the ergodic transform from the space to itself.

The notion of entropy was then used to settle an open question. This
involves the relationship between two dynamical systems $\X$ and $\Y$. 

\begin{definition}
Two dynamical systems $\X$ and $\Y$ are said to be \emph{isomorphic}
to each other if there is a measure preserving invertible map 
$\phi:X \to Y$ such that $\phi T(x) = S \phi(x)$ for $\mu$-almost
every $x \in X$.
\end{definition}

Now let us observe the following:
$\phi(x)[i]=(S^i(\phi(x)))[0]=(\phi(T^i x))[0]$.  Hence as long as we
can compute the central coordinates of the images for $T^ix$ $($ for
all $i\in\Z)$, we can compute the isomorphism $\phi(x)$. So, from now
on we only wish to determine the central alphabet of the image under
the isomorphism.

Kolmogorov proved the following theorem.

\begin{theorem}(Kolmogorov \cite{Kolmogorov53})
If two dynamical systems are isomorphic, then they have the same
Kolmogorov-Sinai entropy.
\end{theorem}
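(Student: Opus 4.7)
The plan is to transfer partitions between the two systems via $\phi$, show that this transfer preserves all the quantities appearing in the definition of Kolmogorov--Sinai entropy, and then take suprema. Specifically, I will establish that for every finite partition $\beta$ of $Y$, the pullback $\phi^{-1}\beta$ is a finite partition of $X$ with $h(\phi^{-1}\beta, T) = h(\beta, S)$; the inequality $h(S) \le h(T)$ will then follow by taking the supremum over $\beta$, and the reverse inequality by the symmetric argument applied to $\phi^{-1}$.

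Concretely, I would first fix $\beta = (B_1, \dots, B_n) \in \Pi(Y)$ and set $\phi^{-1}\beta := (\phi^{-1}B_1, \dots, \phi^{-1}B_n)$, which is a finite measurable partition of $X$ up to a $\mu$-null set. Since $\phi$ is measure-preserving, $\mu(\phi^{-1}B_i) = \nu(B_i)$ for each $i$, so $H_\mu(\phi^{-1}\beta) = H_\nu(\beta)$ directly from the definition of partition entropy. To propagate this equality through the refinements appearing in the $k$-entropy, I would invoke the intertwining $\phi T = S\phi$ (valid $\mu$-a.e.), which rearranges to $\phi^{-1}\circ S^{-i} = T^{-i}\circ \phi^{-1}$ on measurable sets modulo a $\mu$-null set. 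Hence
\[
\bigvee_{i=0}^{k-1} T^{-i}\phi^{-1}\beta \;=\; \phi^{-1}\!\left(\bigvee_{i=0}^{k-1} S^{-i}\beta\right),
\]
and the measure-preserving property of $\phi$ again shows that both sides have equal entropy. Dividing by $k$ and letting $k \to \infty$ then yields $h(\phi^{-1}\beta, T) = h(\beta, S)$.

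Taking the supremum over $\beta \in \Pi(Y)$ and observing that each $\phi^{-1}\beta$ lies in $\Pi(X)$ gives $h(S) \le h(T)$. Because $\phi$ is invertible, its inverse $\phi^{-1}:Y \to X$ is itself a measure-preserving isomorphism satisfying $\phi^{-1}S = T\phi^{-1}$ $\nu$-a.e., so the same argument with the roles of the two systems reversed yields $h(T) \le h(S)$. The two inequalities combine to give $h(T) = h(S)$.

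The only mildly delicate point is the repeated ``modulo a null set'' caveat that arises when pulling back iterates of $S$ and manipulating joins of partitions. This is routine rather than a genuine obstacle: entropy of a partition depends only on the measures of its cells, and only countably many null-set exceptions enter the argument, so their union remains null and affects none of the $H_\mu$ or $H_\nu$ values along the way.
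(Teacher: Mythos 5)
Your proof is correct and is the standard argument for this classical theorem. The paper itself does not supply a proof — it simply cites Kolmogorov — so there is no paper proof to compare against; but your route (pull back a partition $\beta$ on $Y$ via $\phi^{-1}$, use measure-preservation to get $H_\mu(\phi^{-1}\beta)=H_\nu(\beta)$, use the intertwining $T^{-i}\phi^{-1}=\phi^{-1}S^{-i}$ to commute $\phi^{-1}$ through joins, conclude $h(\phi^{-1}\beta,T)=h(\beta,S)$, then take suprema and symmetrize) is exactly how the result is proved in the standard references such as Petersen or Billingsley that the paper cites. Your handling of the a.e.\ caveat is also adequate: only countably many null sets arise, and partition entropy depends only on cell measures.

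One minor remark worth making explicit: for the forward inequality $h(S)\le h(T)$ you do not actually need invertibility of $\phi$ — the identity $T^{-1}\phi^{-1}B=(\phi T)^{-1}B=(S\phi)^{-1}B=\phi^{-1}S^{-1}B$ holds for any factor map, so a factor map already gives $h(S)\le h(T)$. Invertibility enters only when you run the symmetric argument through $\phi^{-1}$ to obtain the reverse inequality, which is where the two entropies are forced to coincide. Phrasing it this way also clarifies that the equality is genuinely a consequence of $\phi$ being an isomorphism and not merely a factor map.
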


He used this to negate the existence of a specific isomorphism by
showing that the systems involved had different
entropies. \cite{Billingsley65}

The converse of the question does not hold in general. To see some
examples, see Section 5 of Billingsley \cite{Billingsley65}. However,
Ornstein showed a powerful result: that for a large class of systems,
called \emph{finitely determined systems}, the converse of Kolmogorov's
theorem is true -- that is, if two such systems have the same entropy,
then there is an isomorphism between them \cite{Ornstein70b}. This
construction cannot be ``continuous'' in general. In a more specific
context, Keane and Smorodinsky \cite{KeaneSmorodinsky79} gave a
finitary construction between two Bernoulli systems of the same
entropy. We introduce the terminology below.

\begin{definition}
An isomorphism is called \emph{finitary} if for almost every $x \in X$
there exists a $j \in \N$ such that for every $x' \in X$, such that
$x[-j\dots 0 \dots j] = x'[-j\dots 0 \dots j]$, we have that $(\phi
x)[0] = (\phi x')[0]$.
\end{definition}

Note that this $j$ exists only for a measure 1 subset of $X$, and not
necessarily for every point in it. Also, the $j$ depends on the
specific $x$ that we choose. Keane and Smorodinsky proved that for
Bernoulli systems, Ornstein's construction can be made finitary.

\begin{theorem}\cite{KeaneSmorodinsky79}
If $\X$ and $\Y$ are two Bernoulli systems with the same
Kolmogorov-Sinai entropy, then there is a finitary isomorphism between
$\X$ and $\Y$.
\end{theorem}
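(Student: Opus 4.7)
The plan is to construct $\phi$ as a limit of successive refinements, using a marker-and-filler scheme. First I would apply a Marker Lemma reduction: by relabeling and, if necessary, splitting symbols in $\X$ and $\Y$, assume without loss of generality that a common symbol, call it $0$, appears with the same probability in both systems. For a $\mu$-typical $x\in X$, the positions at which $0$ occurs form the \emph{skeleton} of $x$, and any candidate image $\phi(x)$ must have an identical skeleton; by the ergodic theorem, skeletons are identically distributed in $\X$ and $\Y$, so this constraint is compatible with measure preservation.

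Next I would handle the non-zero gaps between consecutive markers. The Shannon--McMillan--Breiman theorem implies that, for each block length $n$, there are roughly $2^{nh}$ typical length-$n$ fillers in each system, each of nearly uniform conditional mass. Using Hall's marriage theorem I would construct, for each $n$, a measure-preserving bijection between typical fillers of length $n$ in $\X$ and $\Y$; atypical fillers form an exponentially small set whose measure is absorbed into later stages of the construction. This yields a zeroth-order assignment $\phi_0$ defined almost everywhere.

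The isomorphism is then built inductively: at stage $k$ the construction uses longer blocks of context (more surrounding markers, together with their own fillers) to refine the stage-$(k{-}1)$ assignment, now matching typical fillers of \emph{super-blocks} of length $n_k\to\infty$. The key consistency requirement is that the measures of the symmetric differences between consecutive stages are summable, so by the Borel--Cantelli lemma the limit $\phi(x)=\lim_k \phi_k(x)$ exists for $\mu$-a.e.\ $x$ and defines a measure-preserving map; running the identical construction with the roles of $\X$ and $\Y$ swapped yields the inverse.

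The main obstacle will be ensuring \emph{finitariness}: for $\mu$-a.e.\ $x$ the central coordinate $(\phi x)[0]$ must be determined by a finite window $x[-j,\dots,j]$. This forces the bijections at each stage to depend only on a bounded neighborhood of the origin, rather than on the global combinatorics of Hall's theorem. The trick will be to design the marriage at scale $n_k$ so that the reassignments from stage $k-1$ to stage $k$ affect only a set of measure summable in $k$; a second application of Borel--Cantelli then shows that for a.e.\ $x$ the value $(\phi_k x)[0]$ stabilizes after finitely many stages, yielding the required $j=j(x)$. Controlling the marriage locally, while still matching the correct measures of typical fillers at every scale, is the delicate combinatorial core of the argument.
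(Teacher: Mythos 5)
Your sketch captures the high-level architecture of the Keane--Smorodinsky proof --- markers, skeletons, fillers, a marriage argument, and an inductive refinement that eventually stabilizes the central coordinate --- and you correctly identify the stabilization at scale as the delicate point. However, several steps as stated would fail, and the fixes are precisely where the technical core of the argument lives.

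First, the claim that one can build a ``measure-preserving bijection between typical fillers of length $n$ in $\X$ and $\Y$'' is false as stated: although both systems have roughly $2^{nh}$ typical fillers of nearly uniform mass, the exact cardinalities and exact masses do not match, so no bijection between individual fillers exists. The construction instead defines an equivalence relation on fillers (two fillers of a skeleton are equivalent when they agree on a carefully chosen initial segment of the blank positions, chosen so the equivalence class has probability just above a fixed threshold), and then builds a \emph{society} --- a one-to-many relation satisfying a Hall-type condition --- between equivalence classes of $\mathcal{A}$-fillers and equivalence classes of $\mathcal{C}$-fillers. The Marriage Lemma used is not Hall's theorem for bijections but the statement that a minimal subsociety leaves strictly fewer than $|\Omega_2|$ vertices of $\Omega_2$ with multiple neighbors; this is what guarantees that, passing to a minimal subsociety at each scale, most filler classes acquire a unique match. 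Second, your ``super-blocks of length $n_k$'' are not generic long blocks: they must be the skeletons of rank $r$, which decompose canonically into skeletons of rank $r-1$. This nested decomposition is what lets the rank-$r$ society be built as a join of rank-$(r-1)$ societies and is exactly why the assignments at successive ranks are consistent; without it the reassignment sets between stages are not controllable, and Borel--Cantelli has nothing to bite on. Third, the Marker Lemma is not merely ``relabeling and splitting'': when $|\Sigma_A|\neq|\Sigma_B|$ one cannot in general split to match even a single symbol's probability while remaining Bernoulli on the same alphabet. The standard device is to construct an \emph{intermediate} Bernoulli system $\mathcal{C}$ with $P_C(0)=P_A(0)$ and $P_C(1)=P_B(1)$ and the same entropy, and to compose the isomorphisms $\mathcal{A}\to\mathcal{C}\to\mathcal{B}$. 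With these three corrections --- societies on equivalence classes rather than bijections on fillers, the skeleton hierarchy driving the inductive refinement, and the intermediate system in the marker step --- your outline becomes essentially the Keane--Smorodinsky proof.
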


In our work, we show that the above construction can be utilized to
construct a layerwise lower semicomputable isomorphism between the
sets of algorithmically random objects of two computable mixing Markov
dynamical systems. To introduce this strengthening, we now give an
overview of the setting of algorithmic randomness.

\subsection{Algorithmic Randomness and Layerwise Tests}
\label{sub-sec:algorithmic_randomness}

One of the important applications of the theory of computing is in the
definition of \emph{individual} random objects, finite strings and
infinite binary sequences in a mathematically robust way -- first
defined using constructive measure theory by Martin-L\"{o}f
\cite{MartinLof}. In this paper, we mention a recent generalization of
the theory of algorithmic randomness to fairly general spaces, namely,
computable metric spaces. G\'{a}cs \cite{Gacs05}, and G\'{a}cs, Hoyrup
and Rojas, in a series of works \cite{GHR11}, \cite{HoyRoj09a} have
shown that there are universal tests of randomness in these general
spaces. In this paper, we will deal with the Cantor space, where most
of the general theory is not directly required.  However, we need this
theory for two specific purposes -- first, we need the definition of a
computable probability space.
%and we specialize the definition in \cite{GHR11} and \cite{HoyRoj09a}
%to this end.
Second, the general theory of computable
metric spaces is used to define the notion of layerwise computability
\cite{HoyRoj09a}, \cite{HoyRoj09} which provides a more flexible way
to determine whether an element of the space is algorithmically
random. This theory plays a crucial role in our result.

\begin{definition}
  A space $(X, d)$ is called a \emph{computable metric space} if it
  satisfies the following.
  \begin{enumerate}
    \item $X$ is separable -- \emph{i.e.}, it has a countable dense
      subset $\mathcal{S}$.
    \item $\mathcal{S} = \{ s_i \mid i \in \N\}$ is a computably
      enumerable set.
    \item For any $s_i, s_j \in \mathcal{S}$, $d(s_i, s_j)$ are
      uniformly computable real numbers.
  \end{enumerate}
\end{definition}

If $x \in X$ and $r > 0$, then the metric ball $B(x,r)$ is the subset
of $X$ of points at less than $r$ distance from $x$. We consider a set
of ideal balls $\mathcal{N} = \{B(s,q) \mid s \in \mathcal{S}, q \in
\mathcal{Q}\}$. The set of ideal balls is associated with a canonical
computably enumerable numbering $\mathcal{N} = \{B_i \mid i \in \N\}$.

\begin{example}
The unit interval $[0,1]$ endowed with the Euclidean metric, is a
computable metric space. The set of dyadic rationals $\{ \frac{m}{2^k}
\mid m, k \in \N\}$is a computably enumerable dense subset
$\mathcal{S}$.  The set of canonical balls is then uniquely
determined.

Pick any computable enumeration of the rationals.  Then it is routine
to utilize this to produce a canonical enumeration of the set of ideal
balls.  \exampleend
\end{example}

\begin{definition}
An \emph{effectively open set} is an open set $U$ such that there is a
computably enumerable set of indices $E \subseteq \N$ with $\cup_{j
  \in E} B_j = U$.
\end{definition}

Thus effectively open sets are the analogues of computably enumerable
sets. Similarly, we can define notions of computability on these
metric spaces. A function $f : X \to [-\infty, \infty]$ is \emph{lower
  semicomputable} if the sets $f^{-1}(q,\infty]$ are uniformly
  effectively open. A function $f: X \to [-\infty, \infty]$ is
  \emph{upper semicomputable} if $-f$ is lower semicomputable, and is
  computable if it is both upper and lower semicomputable.

\begin{definition}
Let $(X, d, \mathcal{S})$ be a computable metric space. A Borel
probability measure $\mu$ on $X$ is \emph{computable} if the
probability of any finite union of canonical balls is computable.
\end{definition}

In other words, there is a machine, which for every $\epsilon$ and
every finite union of cylinders $C$, returns a rational number with
$\epsilon$ of the probability of $C$.\footnote{ This is a more
  restricted notion than that considered in Hoyrup and Rojas
  \cite{GHR09b}.} 

\begin{example}
For the previous example, the Borel measure generated by specifying
that $\mu((x,y]) = |y-x|$ is a computable probability measure.
\exampleend
\end{example}

Hoyrup and Rojas \cite{HoyRoj09a} prove an effective Prokhorov theorem
for computable probability measures on computable metric spaces, which
is the basis for their new definition of algorithmic randomness. For
this, first we need the notion of a layerwise lower semicomputable
function.

A \emph{Martin-L\"of test} $O$ is a sequence of uniformly effectively
open sets $O_n$ such that for every $n \in \N$, $P(O_n) <
\frac{1}{2^n}$. A point $x$ is said to be Martin-L\"of random if for
every Martin-L\"of test $O$, $x \notin O_n$ for some $n$. If $P$ is a
computable probability measure, then the set of Martin-L\"of points
has $P$ measure 1.

Every computable probability space $(X,P)$ also has a universal
Martin-L\"of test -- that is, there is a Martin-L\"of test $U$ such
that $x \in X$ is Martin-L\"of random if and only if there is an $n
\in \N$, $x \notin U_n$.

\begin{definition}\cite{HoyRoj09a}, \cite{HoyRoj09b}
Let $(X,P)$ be a computable probability space. Let $U$ be a universal
Martin-L\"of test for $P$. Then the sequence of compact sets $\langle
K_n \rangle_{n=0}^\infty$ where $K_n = X - U_n$ for every $n \in \N$,
is defined as the \emph{layering} of the space. For every $n \in \N$
is called the \emph{$n^{\text{th}}$ layer} of the space.
\end{definition}

\begin{definition}
A lowersemicomputable function $f:X \to \R$ is called \emph{layerwise
  lowersemicomputable} if it is uniformly computable on $\langle K
\rangle_{n=1}^{\infty}$.
\end{definition}

The layerwise lower semicomputable functions may be undefined on every
point that is not Schnorr random. This is important since our
construction diverges on many (but not necessarily all) nonrandom
points.

\begin{definition}
%% A \emph{layerwise Martin-L\"{o}f test} $A$ is a sequence of uniformly
%% layerwise semidecidable sets $A_n$ such that $\mu(A_n) < 2^{-n}$.
A \emph{layerwise integrable test} is a layerwise lower semicomputable
function $t: X \to [0, \infty]$ such that $\int t d\mu$ is finite.

%% The set of elements in $\cap_n A_n^c$ are said to \emph{pass the
%%   test $A$}. There is a universal layerwise Martin-L\"{o}f test
%%   \cite{HoyRoj09a}. The set of elements that pass the universal
%%   test are called the set of \emph{Martin-L\"{o}f random objects}
%%   in the space.
A point $x \in X$ is Martin-L\"of random if for every layerwise
integrable test $t$, we have $t(x) < \infty$.
\end{definition}

The integrable function can be thought of as a martingale
process. Thus a point is Martin-L\"of random if no layerwise
lowersemicomputable martingale can win unbounded money on
it. We deal with a slightly stronger notion, \emph{viz.}, Schnorr
layerwise computability. We use a definition due to Miyabe
\cite{Miyabe13}. 

\begin{definition}{(Miyabe \cite{Miyabe13})}
\label{schnorr_it}
A Martin-L\"of integrable test $f$ is a \emph{Schnorr integrable test}
if there is a computable sequence of rational-valued step functions
$\langle s_n \rangle$ converging to $f$ pointwise such that $||s_{n+1}
- s_{n}||_1 < 2^{-n}$.
\end{definition}

%% \begin{theorem}\cite{HoyRoj09a}
%% Let $t$ be a layerwise integrable test. Then for every Martin-L\"of
%% random point, 
%% \end{theorem}

We will construct an isomorphism between two spaces which is layerwise
lower semicomputable. Then we argue that the composition of the
layerwise test on the domain and the isomorphism constitutes a
layerwise test on the range.

%% TeX-Master=markov.tex

%%  LocalWords:  layerwise Breiman Braverman Nandakumar Hochman Borel
%%  LocalWords:  Hoyrup Kolmogorov Birkhoff's effectivized Ornstein
%%  LocalWords:  effectivization finitary Smorodinsky Billingsley
%%  LocalWords:  uncomputability pointwise indices supremum Prokhorov
%%  LocalWords:  Kolmogorov's Ornstein's semicomputable
%%  LocalWords:  lowersemicomputable

\section{Construction of the Isomorphism}
\subsection{Effective marker lemma - intermediate Markov system}

In this section, given two systems \A and \B with same entropy,
we designate one alphabet from each of \A and \B, say $0$ and $1$.
We then construct a mixing Markov chain $\C=(\Sigma_C,P_C)$ with
designated alphabets $0,1$ and with the following properties:
\begin{enumerate}
 \item $P_A(0)=P_C(0)$ and $P_A(00)=P_C(00)$, \ie probabilities
 of cylinders containing only $0$s are same in \A and \C.
 \item Similarly for the system \B and alphabet $1$: $P_B(1)=P_C(1)$
 and $P_B(11)=P_C(11)$.
 \item Entropy of \C is same as that of \A and \B.
 \item $P_C(\omega)$ is computable for any $\omega\in\Sigma_C^*$.
\end{enumerate}

Here the conditions $3$ and $4$ are somewhat opposing in nature: Since
the entropy of \A and \B can be arbitrary large we may want to set the
probabilities of \C somewhere close to uniform distribution (while
maintaining probabilities of $0$s and $1$s). But due to computable
nature of probabilities of \A and \B we have only approximates
available for the target entropy (the entropy is also computable
). Because the gradient of the entropy function near the uniform
distribution is almost horizontal, we may need to make substantial (
multiplicative) change in probabilities of system \C to match the
target entropy within acceptable error. But this breaks the
computability requirement of the probabilities of \C.

However, we are able to manage the two competing requirements
simultaneously.  We give a recursive procedure to get approximate
probabilities for the system \C. First we make sure that the
probabilities of $0$s and $1$s are matched to that of \A and \B within
acceptable error. Then we enforce a lower and upper bound on the
conditional probabilities of the system \C. The lower-bound enforces
that the system \C is (fast enough) mixing -- we require this lower
bound in further sections. The upper-bound is carefully chosen so that
the entropy of \C can match that of \A and \B while the gradient is
steep enough so that we only need to make small change in
probabilities to make the required change in entropy. This allows us
to produce a sequence of approximates to the probability distribution
of \C while maintaining all the above mentioned requirements.

The formal details of the outline mentioned above is as follows:
Let $H$ be the entropy of the systems $\A$ and $\B$ and the memory of
the Markov processes $\A$ and $\B$ be 1. Let $0$ be a symbol in $\A$
which minimizes the following conditional probability: $P_A(x[1]=a\mid
x[0]=a)$, where $a$ is in $\A$ ( breaking the ties arbitrarily from an
approximation of probabilities up to a small enough error ).  Similarly,
let $1$ be a symbol in $\B$ which minimizes the following conditional
probability: $P_B(x[1]=b\mid x[0]=b)$, where $b$ is in $\B$.
% Also, let $1$ be a symbol in $\B$ with minimal conditional probability,
% \ie $\argmax_{\substack{b\in\Sigma_B\\\omega\in\Sigma_B^m}}P_B(x_{m+1}=b
% \mid x[1,m]=\omega)=1$.

We construct the intermediate system $\C$ to be of memory
$1$. We let the alphabet of the system $\C$ to be $\Sigma_C=
\{0,1,\dots,c\}$, where $c$ is determined later.

We consider the set of $\Pi_0$ probability distributions on
$\Sigma_C^{2}$ such that in each distribution, each element of
$\Sigma_C^{2}$ has probability $>0$ (in fact we will ultimately use a
stronger lower bound). For brevity, let us denote $P_{xa}$ to be the
probability of the string $xa$, where $x,a\in\Sigma_C$, in the
distribution $P\in \Pi_0$. Let us also denote $P_x=\sum_{a\in\Sigma_C}
P_{xa}$. Note that when $P$ is a distribution which describes a Markov
process, the entropy of the process is defined as (the conditional
entropy conditioned on first step): $h(P) =\sum_{xa\in\Sigma_C^{2}}
P_{xa} \log \left(\frac{P_x}{P_{xa}}\right)$\footnote{ For simplicity
let us assume that the base of the logarithm is $\ee$ -- this only
changes entropy by a constant factor. One can perform similar computation
by appropriately multiplying the constant $\log_2\ee$. }.

Let $p_0=P_A(\omega[1]=0\mid\omega[0]=0)$ and
$p_1=P_B(\omega[1]=1\mid\omega[0]=1)$.  Let $\alpha$ be the value
which, if assigned to $P_C(\omega[1]=0\mid \omega[0]=x)$ for all
$x\in\Sigma_C\setminus\{0\}$, yields $P_C(0)=P_A(0)$. A simple
computation shows: $\alpha=\frac{P_A(0)(1-p_0)}{1-P_A(0)}$. Similarly,
let $\beta$ be the value of that needs to be assigned to
$P_C(\omega[1]=1\mid \omega[0]=x)$ for all
$x\in\Sigma_C\setminus\{1\}$ to ensure $P_C(1)=P_B(1)$. Also similar
equality for $\beta$ holds: $\beta=\frac{P_B(1)(1-p_1)}{1-P_B(1)}$

Let $\gamma=\alpha+\beta$. Let $\eta,\delta>0$ be two parameters to be
determined later.  In what follows, we restrict ourselves to
probability distributions in $\Pi\subset \Pi_0$ which have the
following properties:
\begin{itemize}
 \item $P_{x0}=P_x\cdot \alpha$ for all $x\in\Sigma_C\setminus \{0\}$.
 \item $P_{00}=P_A(\omega[0]=0,\omega[1]=0)$.
 \item $P_{x1}=P_x\cdot \beta$ for all $x\in\Sigma_C\setminus \{1\}$.
 \item $P_{11}=P_B(\omega[0]=1,\omega[1]=1)$.
 \item For all $x\in\Sigma_C$ and $a\in\Sigma_C\setminus \{0,1,c\}$,
   $P_{xa}\le \delta P_{x}$, \ie $\delta$ is an upper-bound on the
   conditional probabilities on all but the symbols $0,1$ and $c$.
 \item For all $x\in\Sigma_C$ and $a\in\Sigma_C\setminus \{0,1,c\}$,
   $P_{xa}\ge \eta P_{x}$, \ie $\eta$ is a lower-bound on the
   conditional probabilities on all but the symbols $0,1$ and $c$.
\end{itemize}
We observe that $\Pi$ is closed under convex combinations, \ie $\Pi$ is
convex. We exhibit a distribution in $\Pi$ to show that it is
non-empty. For any distribution in $\Pi$ the probabilities of
cylinders containing only $0$s matches that of $\A$ and similarly
probabilities of cylinders containing only $1$s matches that of
$\B$. Also note that only fixing the conditional probabilities is
enough to specify the distribution, since the conditional
probabilities specify an unique stationary distribution. For the
construction of $\C$, we only restrict ourselves to distributions in
$\Pi$. We call a distribution $\mu\in \Pi$ to be an interior
distribution if all the inequalities are satisfied strictly.

We let $\eta=\frac{1}{c(c-3)}$.
Consider the distribution $Q$ corresponding to the
following conditional probabilities:
\[
 \frac{Q_{xa}}{Q_x}=
 \begin{cases}
  \eta & \mbox{if } a\ne 0,1,c\\
%   1-\gamma-(c-3)\cdot\eta & \mbox{if } x\ne 0,1\mbox{ and } a=c\\
  \alpha & \mbox{if } x\ne 0\mbox{ and } a=0\\
  \beta & \mbox{if } x\ne 1\mbox{ and } a=1\\
   p_0 & \mbox{if } x=0 \mbox{ and } a=0\\
   p_1 & \mbox{if } x=1 \mbox{ and } a=1\\
  1-\sum_{b\ne c}\frac{Q_{xb}}{Q_x} & \mbox{if } a=c\\
 \end{cases}
\]
Note that the stationary distribution for this is given by 
$$Q_a=
\begin{cases}
 P_A(0) & \mbox{if } a=0\\
 P_B(1) & \mbox{if } a=1\\
 \eta & \mbox{if } a\ne0,1,c\\
 1-P_A(0)-P_B(1)-(c-3)\eta & \mbox{if } a=c\\
\end{cases}.$$
So, by construction $Q\in \Pi$ (and hence $\Pi$ is non-empty). Now
the entropy of $Q$ is: 
\begin{align*}
  h(Q) = & \sum_{xa} Q_{xa} \log\frac{Q_x}{Q_{xa}}\\
   = & \sum_{x\ne 0,1} \left( Q_x \alpha\log\frac1\alpha + Q_x
  \beta\log\frac1\beta + \sum_{a\ne 0,1,c} Q_{xa} \log\frac1\eta
  + Q_{xc}\log\frac{1}{1-\gamma-(c-3)\eta}\right) \\
  & + \left(
  Q_0p_0\log\frac{1}{p_0}+ Q_0
  \beta\log\frac1\beta + \sum_{a\ne 0,1,c} Q_{0a} \log\frac1\eta
  + Q_{0c}\log\frac{1}{1-p_0-\beta-(c-3)\eta}
  \right)
  \\
  & + \left(
  Q_1p_1\log\frac{1}{p_1}+ Q_1
  \alpha\log\frac1\alpha + \sum_{a\ne 0,1,c} Q_{1a} \log\frac1\eta
  + Q_{1c}\log\frac{1}{1-p_1-\alpha-(c-3)\eta}
  \right)
  \\
  = & (1-Q_0)\alpha\log\frac1\alpha + Q_0p_0\log\frac{1}{p_0}
  + (1-Q_1)\beta\log\frac1\beta + Q_1p_1\log\frac{1}{p_1}\\
  & + (c-3)\eta\log\frac1\eta + (1-Q_0-Q_1) (1-\gamma-(c-3)\eta)
  \log\frac{1}{1-\gamma-(c-3)\eta}\\
  & + Q_{0c}\log\frac{1}{1-p_0-\beta-(c-3)\eta}
  + Q_{1c}\log\frac{1}{1-p_1-\alpha-(c-3)\eta} \\
  \overset{c\to\infty}{\longrightarrow} & \left((1-Q_0)\alpha\log\frac1\alpha + Q_0p_0\log\frac{1}{p_0}
    + (1-Q_1)\beta\log\frac1\beta + Q_1p_1\log\frac{1}{p_1}\right.\\
  & \left.+ (1-Q_0-Q_1) (1-\gamma)\log\frac{1}{1-\gamma} + Q_{0c}
  \log\frac{1}{1-p_0-\beta} + Q_{1c}\log\frac{1}{1-p_1
  -\alpha}\right)\\
\end{align*} 
for our choice of $\eta$. Now we notice that by considering $\A^m$ and $\B^m$ ( \ie
new alphabets are $m$-tuple of old alphabets and new shifts to be old shift repeated
$m$ time ), the limit value decreases ( as $p_0, p_1$ doesn't increase and further,
$Q_0, Q_1$ and hence $\alpha,\beta$ decreases ) while the entropies of the systems
$\A^m$, $\B^m$ increases ( becomes $mH$ ). An isomorphism between $\A^m$ and $\B^m$
yields an isomorphism between $\A$ and $\B$. We choose a suitable value of $m$ so that
limit value of $h(Q)$ becomes strictly less than the entropy of $\A^m$ and $\B^m$. We
construct isomorphism between $\A^m$ and $\B^m$ and for simplicity of notations we 
ignore $m$ from here on.

Let $\delta=\frac{1}{M(c-3)}$ for some large enough $M$ to be determined later.
Consider the following distribution $R$, as given by the conditional probabilities:
\[
 \frac{R_{xa}}{R_x} =
 \begin{cases}
  \delta & \mbox{if } a\ne 0,1,c\\
%   1-\gamma-(c-3)\cdot\eta & \mbox{if } x\ne 0,1\mbox{ and } a=c\\
  \alpha & \mbox{if } x\ne 0\mbox{ and } a=0\\
  \beta & \mbox{if } x\ne 1\mbox{ and } a=1\\
   p_0 & \mbox{if } x=0 \mbox{ and } a=0\\
   p_1 & \mbox{if } x=1 \mbox{ and } a=1\\
  1-\sum_{b\ne c}\frac{Q_{xb}}{Q_x} & \mbox{if } a=c\\
 \end{cases}
\]
As in previous case, the stationary distribution is given by: 
$$R_a=
\begin{cases}
 P_A(0) & \mbox{if } a=0\\
 P_B(1) & \mbox{if } a=1\\
 \delta & \mbox{if } a\ne0,1,c\\
 1-P_A(0)-P_B(1)-(c-3)\delta & \mbox{if } a=c\\
\end{cases}.$$
Following the previous computation, we see that
\begin{align*}
 h(R) = & (1-R_0)\alpha\log\frac1\alpha + R_0p_0\log\frac{1}{p_0}
  + (1-R_1)\beta\log\frac1\beta + R_1p_1\log\frac{1}{p_1}\\
  & + (c-3)\delta\log\frac1\delta + (1-R_0-R_1) (1-\gamma-(c-3)\delta)
  \log\frac{1}{1-\gamma-(c-3)\delta}\\
  & + R_{0c}\log\frac{1}{1-p_0-\beta-(c-3)\delta}
  + R_{1c}\log\frac{1}{1-p_1-\alpha-(c-3)\delta} \\
  = & \Theta\left(\frac{\log ({(c-3)M})}{M}\right).
\end{align*}
Hence for a fixed $M$, we can choose a large enough $c$ so that $h(R)$ strictly
surpasses $H$.

We treat $h$ as a function $h:\xspace \rrr^{c^{2}} \to \rrr$, where we index
the co-ordinates with corresponding string in $\Sigma_C^{2}$.
Now:
\begin{align*}
\left(\nabla h\right)_{xa} & =\frac{\partial h}{\partial P_{xa}}\\
& = \frac{\partial}{\partial P_{xa}} \left( \sum_{\substack{b\in\Sigma_C\\b\ne a}}
      P_{xb} \log\frac{P_x}{P_{xb}} \right) + \frac{\partial}{\partial P_{xa}}
	\left(P_{xa}\log\frac{P_x}{P_{xa}}\right)\\
& = \sum_{\substack{b\in\Sigma_C\\b\ne a}} P_{xb}\cdot\frac{P_{xb}}{P_x}
      \cdot\frac{1}{P_{xb}} + \log\frac{P_x}{P_{xa}} + P_{xa}\cdot \frac{P_{xa}
	}{P_x}\cdot \frac{P_{xa}-P_x}{P_{xa}^2}\\
& = \frac{P_x-P_{xa}}{P_x} + \log\frac{P_x}{P_{xa}} + \frac{P_{xa}-P_x}{P_{xa}}\\
& = \log\frac{P_x}{P_{xa}}.
\end{align*}

Consider a third distribution (not necessarily in $\Pi$) defined as:
\[
 \frac{U_{xa}}{U_x} =
 \begin{cases}
  \alpha & \mbox{if } x\ne 0\mbox{ and } a=0\\
  \beta & \mbox{if } x\ne 1\mbox{ and } a=1\\
   p_0 & \mbox{if } x=0 \mbox{ and } a=0\\
   p_1 & \mbox{if } x=1 \mbox{ and } a=1\\
  \frac{1-\frac{U_{x0}}{U_x}-\frac{U_{x0}}{U_x}}{c-2} & \mbox{if } a\ne 0,1\\
 \end{cases}
\]
As earlier, we note that $U_{a}=\frac{1-P_A(0)-P_B(1)}{c-2}$ for any
$a\ne 0,1$.

For a given interior distribution $P$ in $\Pi$ and let $P'$ defined as:
$P'=(1-\ve)P+\ve U$ for small enough $\ve$ so that $P'\in \Pi$. ( Since
the upper-bound on the conditional probabilities are strictly
satisfied, adding very small quantity to it doesn't violate the
inequalities ). Now:
\begin{align*}
 \dotprod{\nabla h(P)}{P'-P} = & \ve \sum_{xa} \left(U_{xa}\log\frac{P_x}{P_{xa}} - 
 P_{xa}\log\frac{P_x}{P_{xa}} \right)\\
 = & \ve \sum_{xa} U_{xa}\log\frac{P_x}{P_{xa}} - \ve h(P) \\
 = & \ve\left(\sum_{x\ne 0}U_{x0}\log{1}{\alpha} + \sum_{\substack{xa\\a\ne 0,1,c}} U_{xa}
 \log\frac{P_x}{P_{xa}} + U_{00}\log\frac{1}{p_0} + U_{11}\log\frac{1}{p_1}\right.\\
 & \left.+ \sum_{x} U_{xc} \log\frac{P_x}{P_{xc}}\right) - \ve h(P)\\
 \ge & \ve \sum_{\substack{xa\\a\ne 0,1,c}} U_{xa} \log\frac{1}{\delta} -\ve h(P)\\
 = & \ve\left((c-3)\cdot\frac{1-P_A(0)-P_B(1)}{c-2}\log{M(c-3)} - h(P)\right)
\end{align*}
Note that in $\Pi$, $R$ has highest entropy (since it has most
balanced probability distribution and $R$ is majorized by every other
distribution in $\Pi$). As we saw earlier, $h(R)=\Theta\left(\frac{\log
  c+\log M}{M}\right)$. So for a suitable choice of $c,M$ (large
enough) $\dotprod{\nabla h(P)}{P'-P}=\bigomega(\ve)$.

For the same $P$, let $P''=(1+\ve)P-\ve U$. For small enough $\ve$, we
also note that $P''\in \Pi$. By an argument similar to the previous one,
we get: $\dotprod{\nabla h(P)}{P-P''}=\bigomega(\ve)$.

Also note that both $P',P''$ changes probability values by at most
$\bigoh(\ve)$.

Hence we can choose a $c$ and corresponding $M$. We can choose a
starting distribution in $\Pi$ such that entropy of the distribution is
close (up to, say $\ve$) to $H$. Then for all large enough $n$, we can
get $n$-th approximate for probabilities, the limit of which, defines
the probability distribution of the intermediate system $\C$. By choice
of $\Pi$, we note that probabilities of cylinders containing only $0$s
or only $1$s are as desired. By appropriately modifying the
probability distributions at each step (\ie choosing $P'$ or $P''$ for
appropriate $\ve$), the entropy of $\C$ can also be made to be equal to
$H$.
\subsection{An Effective Skeleton Lemma}{\label{sub-sec:skeleton}}
We can now consider two systems \A and \C with $P_A(0) = P_C(0)$. We
consider those pairs $(x,y) \in \A \times \C$ such that their patterns
of zeroes are ``similar'', and progressively restricting this set, we
will finally ensure an isomorphism for every pair of random
sequences. For this, we now introduce the notion of a
\emph{skeleton}. The skeleton of a finite string is the string we
obtain by mapping any non-zero symbol in it to a special character,
say $\text{\textvisiblespace}$. Consequently, if the patterns of 0s in
two finite strings $x \in \Sigma_A^*$ and $y \in \Sigma_C^*$ are
identical, then their skeletons are identical. In this subsection, we
prove an effective version of Keane and Smorodinsky's Skeleton Lemma
\cite{KeaneSmorodinsky79} (see also Chapter 6, Lemma 5.3 in
Petersen\cite{Petersen89}). What goes in the blank spaces is called a
filler.

The strategy that we adopt in the isomorphism is to map sequences $x
\in \Sigma_A^\infty$ to sequences $y \in \Sigma_C^\infty$ with
identical skeletons. The first stage in the construction is to
identify the set of potential pairs of infinite sequences with
identical skeletons. To this end, we now define the notion of a
skeleton of rank $r$, $r \in \N$, and show that Schnorr random
sequences in any system have skeletons of all ranks. Owing to the fact
that we have only approximation of probabilities of mixing Markov
systems, we consider a different setting for skeletons and later,
their fillers, from the one considered in \cite{KeaneSmorodinsky79}
for Bernoulli process.

Assume that we have a sequence of positive integers $N_0 < N_1 <
\dots$. (This sequence will be fixed when we discuss the filler lemma,
where we establish that it can be computed layerwise.) For a skeleton
of rank $r$ centered at position $i$ in a sequence $x$, we look for
the shortest substring centered at $x[i]$ starting and ending with
$N_r$ (or more) consecutive $0$s. We replace all non-zero symbols with
blanks. We replace all non-zero symbols with blanks. We further replace the
maximal blocks of $0$s\footnote{ Here we deviate from the
  original construction.} of length $1$ ( \ie stand alone $0$s ) with blanks.
\begin{definition}
Let $x \in A^\Z$. A \emph{skeleton $S_{x,r,i}$ of rank $r$ in $x =
  [.....x_{-2} x_{1} x_{0} x_{1} x_{2}.....]$} is defined as
follows. Starting from $x[i]$, pick the shortest string of the form
$0^{n_0} \text{\textvisiblespace}^{\ell_1}
0^{n_1}\ \dots\ \text{\textvisiblespace}^{\ell_k}0^{n_k}$ such that
the following hold.
\begin{itemize}
\item Each $\ell_i$ is at least $1$, $(1 \le i \le k)$.
\item Each $n_i$ is at least $2$, $(1 \le i \le k)$.
\item $n_i < N_r$ for all $1 \le i \le k-1$. Further, both $n_0$ and
  $n_{k}$ are greater than or equal to $N_r$.
\end{itemize}
%% Then $S_{x,r,i}$ is the string $0^{n_0}
%% \text{\textvisiblespace}^{\ell_1}0^{n_1}
%% \text{\textvisiblespace}^{\ell_2}0^{n_2}
%% \dots
%% \text{\textvisiblespace}^{\ell_k}
%%   0^{n_k}$
\end{definition}
% Note that blocks of zeroes of length less than $2$ are converted to
% spaces.
Thus, except for the extremities of the skeleton of rank $r$,
there is no contiguous block of $0$s longer than $N_r$. Also, it is
routine to see that a rank-$r$ skeleton can be uniquely decomposed
into skeletons of rank $r-1$ \cite{Petersen89}.

We now show that the skeleton of every Schnorr random object in has
skeletons of every rank $r$ (with respect to any predetermined
sequence $N_1 < N_2 < \dots$ of numbers) while having sufficiently
many blanks in between. This is an effective version of the Skeleton
Lemma in \cite{KeaneSmorodinsky79}.

\begin{definition}
The \emph{length of the skeleton} $S_{x,r,i}$, denoted
$\ell(S_{x,r,i})$, is defined as follows.
$$\ell(S_{x,r,i}) = \left|\lbrace i \mid x_{i} \neq 0,\quad i \in
S_{x,r,i}\rbrace\right|$$
\end{definition}

\begin{lemma}[Schnorr Layerwise Skeleton Lemma]
\label{lemma:skeleton}
Let $\langle L_r \rangle_{r=1}^{\infty}$ be a computable increasing
sequence of positive integers. Then there is a Schnorr layering
$\langle K'_r \rangle_{r=1}^\infty$ of $\A$ and an increasing sequence
of positive integers $\langle N_r \rangle_{r=0}^\infty$ uniformly
computably enumerable in $\langle K'_r \rangle_{r=1}^\infty$ such that
for every $r \in \N$ and every $x \in K'_r$, the following hold.
\begin{itemize}
\item There is a skeleton centered at $x[0]$ delimited by $N_r$ many
  zeroes.
\item The central skeleton centered at $x[0]$ and delimited by $N_r$
  many zeroes, has length at least $L_r$.
\end{itemize}
\end{lemma}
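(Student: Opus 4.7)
The strategy is to construct, for each rank $r \in \N$, a clopen set $V_r \subseteq \A$ of measure at most $2^{-r}$ and define the layer $K'_r = \A \setminus V_r$. The set $V_r$ will consist of those $x$ for which the central skeleton of rank $r$ either fails to exist within a certain effectively chosen window around $x[0]$, or exists but has length less than $L_r$. Because $V_r$ will depend only on coordinates in a finite window $[-W_r, W_r]$, its measure $\mu(V_r)$ is a finite sum of cylinder probabilities, each computable from the hypothesized approximations $P_A(x,n)$. Hence $\{V_r\}$ forms a Schnorr test, which by the definitions in Section~\ref{sub-sec:algorithmic_randomness} yields a Schnorr layering $\{K'_r\}$, and by construction every $x \in K'_r$ has the two desired properties.

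The heart of the argument is choosing $N_r$ and $W_r$ effectively so that the two competing requirements --- existence of the $N_r$-delimiters and the length lower bound $L_r$ on the central skeleton --- are each violated with probability at most $2^{-r-1}$. Since the system is mixing Markov with conditional probabilities bounded away from $0$ and $1$, the probability of seeing a run of $n$ consecutive zeros decays like $\rho^n$ for a computable $\rho = \max_a P_A(0\mid a) < 1$. First, I would pick $N_r$ large enough (computably in $r$, $L_r$, and the system parameters) that the typical spacing between two consecutive maximal $N_r$-runs of zeros is much larger than $L_r/(1-P_A(0))$; standard renewal/large-deviation estimates for the mixing Markov chain, combined with the computable approximations $P_A(\cdot,n)$, then show that with probability at least $1-2^{-r-1}$ the stretch between the two $N_r$-runs flanking position $0$ contains at least $L_r$ non-zero symbols. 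Second, with $N_r$ fixed, the probability that no $N_r$-run of zeros occurs in a window of length $W$ decays at least like $(1 - c\rho^{N_r})^{W/N_r}$ for a computable $c$; choosing $W_r$ a sufficiently large multiple of $\rho^{-N_r}\cdot r$ makes this at most $2^{-r-1}$ for both sides of position $0$.

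Adding the two probability bounds gives $\mu(V_r) \leq 2^{-r}$. Clopen-ness of $V_r$ together with the computability of cylinder measures (from the assumed Turing machines $M_A$) makes $\mu(V_r)$ uniformly computable in $r$, so $\{V_r\}$ is genuinely a Schnorr test and $\{K'_r\}$ is a Schnorr layering. Since $N_r$ is produced by an explicit computable procedure taking $r$ as input, the sequence $\langle N_r\rangle$ is uniformly computably enumerable in $\langle K'_r\rangle$, as required. Finally, every $x \in K'_r$ satisfies both bullets of the lemma by definition of $V_r$.

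\textbf{Main obstacle.} The delicate step is the simultaneous effective choice of $N_r$ and $W_r$: $N_r$ must be large enough to guarantee long central islands but small enough that $N_r$-runs still appear within a tractable window $W_r$. Turning the mixing property into a quantitative, fully computable estimate requires careful bookkeeping of constants arising from the lower bound on conditional probabilities and from the decomposition of the skeleton into its components (including the slightly nonstandard convention that isolated zeros are treated as blanks). Everything else --- clopen-ness, computability of $\mu(V_r)$, assembly of the layering --- is routine once these estimates are in place.
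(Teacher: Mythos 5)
Your plan is conceptually sound, but it takes a genuinely different route from the paper's proof, so a comparison is warranted.

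The paper does \emph{not} construct the layering by directly estimating the measure of a clopen bad set. Instead it builds a Schnorr \emph{integrable} test in the sense of Miyabe (Definition~\ref{schnorr_it}): it defines a martingale-like function $f$ on cylinders (doubling its value every time a would-be skeleton delimiter is followed by a pair of symbols that is not $00$), takes $S(x)=\sup_n f(x[-n\dots n])$, checks $\int S\,dP_A\le 1$ via Fatou, and then exhibits an explicit computable sequence of step functions $s_n$ with $\|s_{n+1}-s_n\|_1$ decaying geometrically to prove $L^1$-computability. The two failure modes (no rank-$r$ skeleton at all, versus a rank-$r$ skeleton that is too short) are handled by two separate families of tests $S_r$ and $g_r^k$, which are then folded together by a $\sum_r 2^{-r}S_r$-type convex combination to avoid the non-existence of a universal Schnorr test. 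Your approach replaces all of this machinery by a direct Schnorr test: choose $N_r$ and a window half-width $W_r$ computably so that the clopen bad event $V_r$ (no $N_r$-run in the window, or central skeleton too short) has computable measure at most $2^{-r}$, using the $\rho^n$ decay of zero-runs for the first failure mode and the observation that a length-$<L_r$ skeleton forces a $0^{N_r}$ block within $O(L_r N_r)$ of the origin for the second.

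What each buys: your version is more elementary, gives $N_r$ and $W_r$ by an outright computable recipe (making the "uniformly c.e.\ in $\langle K'_r\rangle$" requirement trivial), and avoids any appeal to Fatou or $L^1$-approximation bookkeeping. The paper's integrable-test formulation is heavier here, but it is the same formalism used in the Filler Lemma and in the layerwise-computability argument for $\phi$, so it keeps the proof's infrastructure uniform. Two small points you should make explicit if you flesh this out: (i) your $V_r$ are not nested, so to conclude that every Schnorr random $x$ lies in $K'_r$ for \emph{all} sufficiently large $r$ you need to pass to the tails $O_n=\bigcup_{r\ge n}V_r$, which still have computable measure $\le 2^{-n+1}$ because the series of computable bounds is itself computable; and (ii) the thresholds $N_r, W_r$ depend on the system constants $p_0,\eta,\theta$ which are only approximable, so they must be chosen from conservative rational upper/lower bounds on those constants rather than from their exact values. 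Both are routine, but omitting them would leave real gaps.
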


\begin{proof}
Define $K'_r = \{ x \in X \mid \ell(S_{x,r,0}) \ge L_r \}$. (Note that
in this step, we choose $N_1, \dots, N_r$ to determine the rank-$r$
skeleton.)Thus $K'_r$ contains all points $x$ such that their
``central skeleton'' of rank $r$ contains at least $L_r$ many spaces.

Consider
\[K' = \cup_{n=1}^{\infty} \cap_{r=n}^{\infty} K'_r,\]
the set of points in $X$ such that for large enough ranks $r$, a
skeleton of rank $r$ contains at least $L_r$ many \textvisiblespace\xspace symbols. We
form a Schnorr integrable test which attains infinity on each element
in $K^{'c}$.

Any $x$ in $K^{'c}$ has either of two properties -- first, $x$ does
not have any skeleton of rank $r$ (or above), and second, for every
$n$, there exists some rank $r \ge n$ such that $x$ has a central
skeleton having less than $L_r$ many spaces. We will form a Schnorr
layerwise integrable functions which will attain $\infty$ on $x$ in
either of these cases.

{\bf Case I.} Suppose $x$ has no central skeleton of rank $r$ or
more. By the pigeonhole principle, there is some rank $r'<r$ such that
a rank $r'$ skeleton appears infinitely often in $x$. Suppose $r'$ is
the highest rank which appears infinitely often in any skeleton of
$x$, including non-central skeletons.

Let the left zero extremity (analogously, the right zero extremity) of
a string $w$ be the longest block of zeroes at the left end
(correspondingly the right end) of $w$. (These may, of course be
empty.) Let $ZE: \Sigma_A^* \to \{0\}^*$ be the function which returns
the shorter among the left zero extremity and right zero extremity.

Consider the following function defined on cylinders of
$\Sigma_A^\infty$. The function $f: \Sigma_A^* \to [0,\infty)$ is
  defined by
\begin{align*}
f(\lambda) &= 1\\
f(a_1\ w\ a_2) &= \begin{cases}
                 \frac{1}{(1-P_A(0w0\mid w))} f(w) &\text{ if } |ZE(w)|=N_{r'}
                 \text{ and } a_1a_2 \ne 00\\
                 0                       &\text{ if } |ZE(w)|=N_{r'} 
                 \text{ and }	  a_1a_2 = 00\\
                 f(w)                    &\text{ otherwise}
                 \end{cases}
\end{align*}

Define the function $S: A^\Z \to [0, \infty)$ by $S(x) = \sup_{n}
  f(x[-n \dots 0 \dots n])$. Since $P_A$ is computable, we can
  conclude that $S$ is layerwise lower semicomputable.

For infinitely many $n$, a skeleton of rank $r'$ will appear as the
extremities of $x[-n\dots 0 \dots n]$. Hence the subsequent bits on
the left and the right cannot both be 0. In this case, $f(x[-n-1\dots
  0 \dots n+1]) > f(x[-n \dots 0 \dots n])$. Thus, $S(x) = \infty$.

We observe that $\int f(\lambda) dP_A = 1$. Similarly, on any cylinder
$w$, if $w$ does not have extremities of the form $0^{N_r}$, then
$f(a_1 w a_2) = f(w)$, and we have
$$\sum_{a_1 a_2 \in \Sigma_A^2} f(a_1 w a_2) P_A(a_1wa_2 \mid w) =
f(w) \sum_{a_1a_2 \in \Sigma_A^2} P_A(a_1a_2 \mid w),$$ which is
$f(w)$. If $w$ ends in extremities of the form $0^{N_r}$, then
$$\sum_{a_1 w a_2 \in \Sigma^2\setminus \{00\}} f(a_1 w a_2) P_A(a_1 w
a_2 \mid w) = f(w) \frac{[1-P_A(0w0\mid w)]}{1-P_A(0w0\mid w)},$$
which is $f(w)$ as well. So we have that
$$f(w) P_A(w) = \sum_{a_1a_2 \in \Sigma^2} f(a_1\ w\ a_2)
P_A(a_1\ w\ a_2).$$

Thus, it follows that
$$\int S(x) dP_A = \int \limsup_n f(x[-n \dots 0 \dots n]) dP_A
\le\sup_n \int f(x[-n \dots 0 \dots n]) dP_A= 1,$$ where the
inequality follows by Fatou's lemma.

To show that the layering above is a Schnorr layering, we show that
$S$ is $L^1$-computable. We construct a computable sequence $\langle
s_n \rangle_{n \in \N}$ of computable step functions pointwise
converging to $S$ where for all $n$, $|| s_{n+1} - s_{n} ||_1 \le
\theta^{n}$. \footnote{Without loss of generality, the $2^n$ in the
  Definition \ref{schnorr_it} may be replaced by any computable
  inverse exponentially decaying bound.} The step function $s_n:
\Sigma^* \to \Sigma^*$ is defined by
\begin{align*}
s_n(axb) = \begin{cases}
\max_{0 \le i \le |x|-1} f(x[-i \dots i]) &\text{if }|x| \le n\\
s_n(x)      &\text{otherwise}
\end{cases}
\end{align*}
It is clear that $s_n \to S$ pointwise. Now, $s_n(\omega)$ and
$s_{n+1}(\omega)$ differ only on those points where $\omega[-n \dots
  n])$ has $0^{N_r}$ at both ends. Let us designate the set of strings
$x \in \Sigma^n$ which end with $0^{N_r}$ as $G$.
\begin{align*}
\int |s_{n}(x) - s_{n+1}(x)| dP 
&= \sum_{x \in G, a,b \in \Sigma} |s_{n+1}(axb) - s_{n}(axb)| 
P(axb|x) P(x)\\
&= \frac{s_n(x)}{1-P(0x0|x)} (1-P(0x0|x)) P(x)+ 
0 \times P(x)\\
&\le s_n(x) \theta^{n}.
\end{align*}
It follows that $S$ is a Schnorr layerwise computable function.

The above argument shows that the set of sequences which lack a
particular rank can be captured by a Schnorr layerwise integrable
function. Now we show that sequences which lack some rank can be
similarly captured by a Schnorr layerwise integrable test, by taking a
convex combination of the individual tests, even though in general,
there is no universal Schnorr test. Denoting the test for a particular
rank by $S_r$, consider the test $S = \sum_{r=1}^{\infty} 2^{-r}
S_r$. If there is an $r$ and an $\omega \in A^\Z$ such that
$S_r(\omega) = \infty$, then $S(\omega) = \infty$ as well. Since each
$S_r$ is monotone non-decreasing in the length of the string, so is
$S$. Also, $\int S dP_A = \int \sum_{r=1}^\infty 2^{-r} S_r$, which is
finite. We now show that $S$ is $L^1$ computable.

For $n \in \N$, consider the rational step function $s_n =
\sum_{i=1}^n 2^{-i} s_{i,n}$. As $n \to \infty$, this converges to $S$
pointwise, since each individual sequence $\langle s_{i,n} \rangle_{n
  \in \N}$ converges to $S_i$ pointwise. We now have to show that for
all $n \in \N$, $||s_{n+1} - s_{n} ||_1$ has a computable upper bound
decaying exponentially in $n$, uniformly over $n$.

Now, since each $\langle s_{i,n} \rangle_{n \in \N}$ is monotone
non-decreasing in $n$, it follows that $|s_{n+1}(x) - s_{n}(x)| =
s_{n+1}(x) - s_{n}(x)$. By using the estimates on the individual
$s_i$s, we get the following bound. For every $x \in A^{n+1}$, we have 
\begin{align*}
|s_{n+1}(x) - s_{n}(x)| &= 
\left|\sum_{i=1}^{n+1} 2^{-i} s_{i,n+1}(x) -
 \sum_{i=1}^{n} 2^{-i} s_{i,n}(x)\right|\\
&= \sum_{i=1}^{n} \left[2^{-i} (s_{i,n+1}(x) - s_{i,n}(x))\right] + 
   2^{-(n+1)} s_{i,n+1}(x).
\end{align*}
Hence,
\begin{align*}
\sum_{x \in \Sigma_A^{n+1}} |s_{n+1}(x) - s_{n}(x)| P_A(x) &\le
\sum_{i=1}^n 2^{-i} \theta^{n+1} 
+ 2^{-(n+1)} s_{i,n+1}(x) \theta^{n+1}\\
&< 
2 \theta^{n+1} 
+ 2^{-(n+1)} \frac{1}{\theta^{n+1}} \theta^{n+1}\\
&= 2\theta^{n+1}+2^{-(n+1)}.
\end{align*}

{\bf Case II.} Now suppose that for every $n$, there is a central
skeleton in $x$ of rank $r \ge n$ such that $\ell(S_{x,r,0}) <
L_r$. This implies that within at most $L_r (N_r - 1)$ characters
around $x_0$, the block $0^{N_r}$ will occur in $x$. 

Consider the function $g_r: A^* \to [0,\infty)$ defined by
\begin{align*}
g^k_r(\lambda) &= \frac{1}{2^r L_r(N_r-1)}\\
g^k_r(a_1\ w\ a_2) &= \begin{cases}
  \frac{1}{P_A(0w0 \mid w)} g_r(w) 
  &\text{ if } k \le |w| < L_r(N_r-1) \text{ and } a_1 a_2 =00\\ 
  0 &\text{ if } k \le |w| < L_r(N_r-1) \text{ and } a_1 a_2 \not= 00\\
  g(w) &\text{otherwise.}
  \end{cases}
\end{align*}

As in case I, we can verify that for all cylinders $w$,
$$g^k_r(w)P_A(w) = \sum_{a_1a_2 \in \Sigma^2} g^k_r(a_1\ w\ a_2)
P_A(a_1\ w\ a_2).$$

Consider the function $g_r: A^* \to [0,\infty)$ defined by
$$g_r = \sum_{k=1}^{L_r(N_r-1)} g^k_r.$$

We know that if $x$ has a deficient rank $r$ at length $k$, then 
$$g_r(x) \ge \frac{1}{2^r} \frac{1}{L_r (N_r-1) P_A(0)^{N_r}} \ge 1$$
if we choose large $N_r$ in a suitable manner.

Finally, consider the aggregate function $S: \Sigma_A^\infty \to
[0,\infty)$ defined by $S = \sum_{r=1}^{\infty} \sup_n g^k_r (x[-n
    \dots n])$. Then, as in case I, we see that $S$ is Schnorr
  layerwise lower semicomputable and integrable. Since by assumption
  $x$ has infinitely many $r$ for which $g_r$ attains at least $1$, we
  have that $S(x) = \infty$.
\end{proof}

We will now proceed to choose this sequence of $L_r$s that is assumed
in Lemma \ref{lemma:skeleton}.

%% TeX-Master=markov.tex

%%  LocalWords:  Smorodinsky's layerwise semicomputable ZE dP Fatou's Keane wa
% LocalWords:  isomorphism Schnorr substring computable computably integrable
% LocalWords:  pointwise axb

\subsection[Effectively determining L_r and Filler
  lemma]{Effectively determining $L_r$ and Filler
  lemma}{\label{sub-sec:filler}} In the last subsection, we assume
that we have a sequence $L_0 < L_1 < \dots$ of natural numbers. For
every $i, r \in \N$ and $x \in \A \cup \C$, a skeleton in $x$ of rank
$r$ at position $i$ was the shortest string centered at $x[i]$ and
delimited by the earliest appearance of at least $N_r$ many zeroes and
at least $L_r$ many spaces.  We now see how to determine this sequence
in a Schnorr layerwise lower semicomputable manner.

We define a sequence of $\langle L_r\rangle_{r=1}^\infty$ for the
lengths of the skeletons of rank $r$ inductively. We choose the
sequence $\langle N_r \rangle_{r=1}^{\infty}$ such that a skeleton of
rank $r$ has length at least $L_r$. We compute the lengths $L_r$
layerwise, in such a way that properties analogous to the asymptotic
equipartition property hold for the skeletons of rank $r$ for every
Schnorr random sequence. This will allow us to construct a provably
isomorphic map between \A and \C.

Let $\eta_r=\min_{D\in\{A,C\}}\min_{\substack{a\in\Sigma_D,
    b\in\Sigma_D}} P_D(x[1]=a \mid x[0]=b,r)$ and $\theta_r$ be the
corresponding maximum. For a mixing Markov chain, these will be
bounded away from 0 and 1.  Here, $\eta_r$ and $\theta_r$ are computable.

We pick a strictly increasing sequence $\langle
L_r\rangle_{r=1}^\infty$ such that:\footnote{ Here we deviate from the
   original construction.}  \[\lim_{r\to\infty}
\frac{1}{\eta_r}2^{-L_r(\varepsilon_{r-1}-\varepsilon_r)}=0\].

Let $\F(S) \subseteq \Sigma_A^\ell$ denote the set of fillers for $S$
in \A.  Let $Z_S$ denote the indices of $0$s in $S$ and let the blanks
be in positions $B = (s_1,s_2,\dots s_\ell)$.  Given a filler
$F\in\F(S)$ and an index set $I \subseteq B$, let $\langle I,F,S
\rangle$ denote the cylinder generated by setting $0$s from $S$ and
setting $i^{th}$ position for $i \in I$ with the corresponding symbol
in the filler $F$.

For an $n \in \N,n\ge r$,\footnote{We define $J(F,n)$ only when $n\ge
  r$} we define an equivalence relation $\sim_n$ for error bound
$\varepsilon_n$ on $\F(S)$ and denote equivalence class of $F$ by
$\tilde{F}_n$. We decide a subset of places $J(F,n) \subseteq \{s_1,
s_2,\dots, s_\ell\}$ for each $F$ and declare $F\sim_n F'$ if
$J(F,n)=J(F',n)$ and $F$ agrees with $F'$ on $J(F,n)$.

For a fixed $n\ge r$ and $F$, we define $J$ inductively on the rank of
the skeleton. For a skeleton $S$ of rank 1 and length $\ell$, we
proceed as follows. For a $k \le l$, let $B_k$ denote $(s_1, \dots,
s_k)$. Pick the largest positive integer $k$, $k \le \ell$ such that
$P_A(\ \langle B_k ,F,S\rangle,\ n\ )$ is at least $\nfrac{3}{2\eta_1}
2^{-(\ell+|Z_S|)(H-\epsilon_1)}$. Then, let $J(F,n) = Z_S \cup B_k$.

Now, for a rank $r\ge 2$ skeleton $S$ and $F\in\F(S)$, we do the
following: Let us assume that $S=S_1\times S_2\times \dots \times S_t$
is the skeleton decomposition of $S$ where each $S_i$ is of rank
$r-1$. Also let $F_1,F_2,\dots,F_t$ are the corresponding fillers
which coincides with $F$. We assume that we have determined $J(F_i,
n\log 3t)$ inductively for each $F_i$. Let $J_0(F,n)=\cup_{i=1}^t
J(F_i, n\log 3t)$.\footnote{The purpose of $n\log 3t$ will be clear in
  lemma \ref{lemma:prod-of-society}} These are the positions in $S$
which have already been determined in the previous rank.

Also, let $\{s_1,\dots s_\ell\}\setminus J_0(F,n) =(t_1\dots
t_u)$. These are the positions in the skeleton $S$ which have not been
fixed by any rank $r-1$ sub-skeletons. Let $T_k = (t_1, \dots, t_k)$,
for $k \le \ell$. Then, we set $J(F,n)=Z_S\cup J_0(F,n)\cup T_k$,
where $k \le u$ is the largest index such that $P_A(\ \langle T_k \cup
J_0(F,n), F,S \rangle, n\ )$ exceeds
$\frac{(1+\varepsilon_r)}{\eta_r}2^{-(\ell+|Z_S|)(H_{}-\varepsilon_r)}$.
Here, $\eta_r/(1+\varepsilon_r)$ is a pessimistic approximation of
true minimum conditional probability of an alphabet.

Let $x=0^{l_1}x_10^{\ell_2}x_20^{\ell_3}\dots 0^{\ell_{t}}
x_{t}0^{\ell_{t+1}}$ be a string where $\ell_i>m$ for all $1\le i\le
t+1$.  Let
$$P'_A(x,n)=\frac{\prod_{i=1}^t
    P_A(0^{\ell_{i}}x_i0^{\ell_{i+1}},n\log 3t)}{\prod_{i=2}^t
    P_A(0^{\ell_i})}.$$
By the Markov property, 
\begin{equation}
\label{eqn:probmult}
  \left|P_A(x)-P'_A(x,n)\right|\le \varepsilon_n P'_A(x,n).
\end{equation}
So, $P'_A(x,n)$ can be used in place of $P_A(x,n)$ but for the fact we
cannot compute $P_A(0^{\ell_i})$ exactly. But we use the essentially
multiplicative nature of $P'_A$ and that it approximates $P_A$ in the
proof of Lemma \ref{lemma:prod-of-society}.  The approximation is as
follows: $|P_A(x,n)-P'_A(x,n)| \le 2\varepsilon_r P_A(x,n)$ -- this is
the essential observation which makes our construction possible.

Also, we note that for a given $F\in\F(S)$ and an integer $n$,
$J(F,n)\subseteq J(F,n+1)$. In other words, if we decrease the error
bound in estimation of probability the equivalence relation can only
get finer. Similar relations holds for \C. Then the asymptotic
equipartition property of mixing Markov chains yields the following
bounds.

\begin{lemma}[Filler Lemma]
  \label{lemma:filler}
There is a Schnorr layering $\langle K''_p\rangle_{p=1}^\infty$ such
that for every $n$, there is a large enough $r\ge n$ such that for
every skeleton $S$ of rank $r$ and length $\ell$ corresponding to
$x\in K''_r$, we have the following.
\begin{enumerate}
\item For all $F\in\F(S)$, $P_A\left(\tilde{F}_r,r\right)\ge
  (1+\varepsilon_r)2^{-L(H-\varepsilon_r)}$
\item For all $F\in\F(S)$ except maybe on a set of measure
  $\varepsilon_n$:
  \begin{enumerate}
  \item
    $P_A(\tilde{F}_r,r) <\frac{1+\varepsilon_n}{\eta_n}
    2^{-L(H_{}-\varepsilon_n)}$
  \item $\frac{1}{L}|J(F,r)| > 1-\frac{2}{|\log_2
    \theta_r|}\varepsilon_n$
  \end{enumerate}
\end{enumerate}
where $L=\ell+|Z_S|$.
\end{lemma}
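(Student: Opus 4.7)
The proof will be a rank-by-rank analysis that combines three ingredients: (i) the definition of the set $J(F,r)$, which by construction makes the lower bound in (1) essentially automatic, (ii) an effective asymptotic equipartition property (AEP) for mixing Markov chains, which supplies the concentration estimate needed for the upper bound in (2a) and the length bound in (2b), and (iii) a Schnorr layerwise integrable test that bundles together the Markov-chain AEP estimates for all sufficiently high ranks. The layering $\langle K''_p\rangle$ will be obtained as the intersection of the skeleton layering $\langle K'_r\rangle$ from Lemma \ref{lemma:skeleton} with the complement of the AEP test.

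The lower bound in part (1) is the cleanest piece. By construction, the greedy procedure defining $J(F,r)$ stops precisely when adding one more position would drop the approximate cylinder probability $P_A(\langle J(F,r),F,S\rangle,r)$ below the threshold $\tfrac{1+\varepsilon_r}{\eta_r}2^{-L(H-\varepsilon_r)}$. Hence, just before the stop, the approximate probability exceeds that threshold, and dividing out the multiplicative error $\varepsilon_r$ (using equation (\ref{eqn:probmult}) and the approximation quality of $P'_A$) yields the stated lower bound $(1+\varepsilon_r)2^{-L(H-\varepsilon_r)}$ on $P_A(\tilde F_r,r)$. The inductive step in rank is handled by Lemma \ref{lemma:prod-of-society}: a rank-$r$ skeleton factors as $S_1\times\cdots\times S_t$, its filler splits as $F_1,\dots,F_t$, and $J_0(F,n)=\bigcup_i J(F_i, n\log 3t)$, so the inductive-hypothesis bounds for each piece multiply up — and the $n\log 3t$ inflation absorbs both the product of $t$ error factors and the normalization by $\prod_{i}P_A(0^{\ell_i})$ appearing in $P'_A$.

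The upper bound (2a) and the position count (2b) will require the effective AEP for mixing Markov processes, phrased so as to be a Schnorr layerwise test rather than merely a Martin-L\"of test; since mixing Markov chains have conditional probabilities bounded in $[\eta_r,\theta_r]$, the AEP estimates admit explicit exponential-tail bounds that can be summed into an $L^1$-computable function, giving the Schnorr test required for our layering. Applied to the filler of a typical $F$, the AEP yields $2^{-L(H+\varepsilon_n)}\le P_A(F)\le 2^{-L(H-\varepsilon_n)}$ outside an $\varepsilon_n$-set. For (2a), since adding the last character to $J(F,r)$ dropped the cylinder probability below the threshold and conditional probabilities are at least $\eta_r\ge\eta_n$, the pre-stop value is at most $1/\eta_n$ times the threshold, which, absorbing the approximation error, gives $\tfrac{1+\varepsilon_n}{\eta_n}2^{-L(H-\varepsilon_n)}$. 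For (2b), the inequality $2^{-L(H+\varepsilon_n)}\le P_A(F)\le P_A(\langle J(F,r),F,S\rangle)\cdot \theta_r^{L-|J(F,r)|}$ is obtained by noting that each unfixed position contributes at most $\theta_r$ to the filler probability; taking logarithms and dividing by $L$ yields $(L-|J(F,r)|)/L\le 2\varepsilon_n/|\log_2\theta_r|$, which is exactly (2b).

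The main obstacle is the bookkeeping of approximation errors across the rank recursion: because $P_A$ is only given up to multiplicative error $\varepsilon_n$, and because the definition of $J$ uses approximate probabilities, we have to ensure that the errors at rank $r-1$ (inherited from $J(F_i,n\log 3t)$) do not swamp the sharper thresholds $\varepsilon_r$ used at rank $r$. This is where the essentially multiplicative nature of $P'_A$ (equation (\ref{eqn:probmult})) and the sublinear scaling $n\log 3t$ of the precision parameter are crucial; both ensure that the error accumulates additively on a logarithmic scale and is dominated by $\varepsilon_r L$. Once these approximation bookkeeping estimates are in place, the AEP provides the exceptional-measure-$\varepsilon_n$ set, and the Schnorr $L^1$ test of the Markov AEP, combined with the skeleton layering, gives the required Schnorr layering $\langle K''_p\rangle$.
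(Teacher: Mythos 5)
Your overall strategy matches the paper's: use an effective asymptotic equipartition property for the mixing Markov chain to build a Schnorr layering $\langle K''_p\rangle$, derive the lower bound in (1) from the greedy stopping rule defining $J(F,r)$, and derive (2b) by counting unfixed positions against the AEP lower bound on $P_A(F)$. The paper defines $K''_p$ to be the AEP-typical set directly (with $P_A(K''_p)$ computable because membership is determined by a finite window and $P_A$ is computable), rather than intersecting with $K'_r$; that is a harmless formulation difference.

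However, there is a genuine gap in your treatment of (2a). The upper bound $P_A(\tilde F_r,r) < \tfrac{1+\varepsilon_n}{\eta_n}2^{-L(H-\varepsilon_n)}$ needs a case split that your argument does not make. When $|J(F,r)|<L$, i.e.\ the greedy procedure actually stopped, the bound is automatic from the definition of $J$ (the first cylinder probability to drop below the $\tfrac{1+\varepsilon_r}{\eta_r}2^{-L(H-\varepsilon_r)}$ threshold is also below the weaker $\tfrac{1+\varepsilon_n}{\eta_n}2^{-L(H-\varepsilon_n)}$ threshold), and no exceptional set is needed. The exceptional set of measure $\varepsilon_n$ in part (2) arises \emph{only} from the complementary case $|J(F,r)|=L$, where the greedy procedure never stopped, $\tilde F_r = F$ is a singleton, and one must invoke the AEP directly on the full filler $F$ of length $L\ge L_r\ge k_{n'}$: outside the AEP-atypical set (of measure $\le\varepsilon_{n'}<\varepsilon_n$) one gets $P_A(F,r)<\tfrac{1+\varepsilon_{n'}}{\eta_{n'}}2^{-L(H-\varepsilon_{n'})}$. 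Your ``pre-stop value is at most $1/\eta_n$ times the threshold'' phrasing presupposes a stop and simply does not apply here, so your argument never accounts for the $\varepsilon_n$-measure exceptional set in (2a), which is the whole content of that clause. Similarly, (2b) in the paper is proved by contraposition within this framework: if $L-|J(F,r)|\ge 2L\varepsilon_n/|\log_2\theta_r|$ then $P_A(F,r)$ is forced below the AEP lower bound, so $F$ lies in the AEP-atypical set; your direct calculation is essentially the same estimate run forward, but you then need to tie it to the same exceptional set, which you do not do explicitly.

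A smaller slip: in (1), the factor that cancels the $1/\eta_r$ in the threshold is the lower bound $\eta_r$ on the one extra conditional probability picked up when the last position is included in $J(F,r)$; it has nothing to do with the multiplicative approximation error $\varepsilon_r$ from equation~(\ref{eqn:probmult}), which is accounted for separately in the definition of $P_A(\cdot,r)$. Also, the paper's proof of the Filler Lemma itself is direct in the rank $r$ (taking $J(F,r)$ as already constructed); the rank recursion and the $n\log 3t$ precision inflation matter for the definition of $J$ and for Lemma~\ref{lemma:prod-of-society}, not for the estimates proved here, so the rank-induction scaffolding you set up in the second paragraph is not actually used in the argument.
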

\begin{proof}
%   Let $p'$ denote an integer such that $\varepsilon_{p}\le
%   2H\varepsilon_{p'}$.
From the asymptotic equipartition property for Markov chains (see, for
example, Chapter 1 of Khinchin \cite{Khinchin57}), we know that there
is a Schnorr layering $\langle K''_p\rangle_{p=1}^\infty$ defined
below.

For all $p$ there is a $k_p$ so that for all $k\ge k_p$,
$\Sigma_{A}^k=K''_p\cup (K''_p)^c$ is the largest set with the
following properties:
\begin{itemize}
\item $P_A(K''_p)\ge 1-\varepsilon_{p}$
\item For each $x\in K''_p$ we have 
$$\frac{1-\varepsilon_p}{\eta_p}2^{-k(H_{}+\varepsilon_{p})}< P_A(x,p)<
    \frac{1+\varepsilon_p}{\eta_p}2^{-k(H_{}-\varepsilon_{p})}.$$
\end{itemize}
Since the last condition can be decided by examining $x[-p \dots p]$
and $P_A$ is computable, it follows that $P_A(K''_p)$ is computable,
uniformly in $p$.

Now given an $n$, let $n'$ be such that $2\varepsilon_{n'}\le
\varepsilon_n$.  Let $r\ge n+1$ be such that $L_r\ge k_{n'}$.  Such an
$r$ exists, since $\{L_r\}$ is an increasing sequence. For brevity,
we denote $J_0(F) \cup \{t_1, \dots, t_w\}$ by $J_1$.
\begin{enumerate}
\item Let $J(F,r)=Z_S\cup J_1$. Then 
  \begin{eqnarray*}
  P_A(\tilde{F}_r,r) & = & P_A(\langle
  J_1,F,S\rangle,r)\\  & = &
  P_A(J_1,F,S\rangle,r) \times
  P_A(F[t_w]|\langle
  J_1,F,S\rangle,r)\\ & \ge &
  \frac{1+\varepsilon_r}{\eta_r}2^{-L(H_{}-\varepsilon_{r})}
  \times P_A(F[t_w]|\langle
  J_1,F,S\rangle,r)\\ & \ge &
  (1+\varepsilon_r)2^{-L(H_{}-\varepsilon_{r})}.
  \end{eqnarray*}
  where the inequality before the last follows from the definition of
  $J(F,r)$.
\item
  \begin{enumerate}
  \item If $|J(F,r)|<L$, then by the definition of $J(F,r)$, we
    have $$P_A(\tilde{F}_r,r)<\frac{1+\varepsilon_r}{\eta_r}
    2^{-L(H_{}-\varepsilon_r)}.$$
    If $|J(F,r)|=L$, then
    $\tilde{F}_r=F$. But $|F|=L\ge L_n\ge k_{n'}$ and hence $$P_A(F,r)
    <\frac{1+\varepsilon_{n'}}{\eta_{n'}} 2^{-L(H_{}-\varepsilon_{n'})}$$
    unless $F\in (K''_r)^c$ and
    $P_A((K''_r)^c)\le\varepsilon_{n'}<\varepsilon_n$.
    Since $\varepsilon_r<\varepsilon_n'<\varepsilon_n$, we have
    $\frac{1+\varepsilon_r}{\eta_r}
    2^{-L(H_{}-\varepsilon_r)}<\frac{1+\varepsilon_{n'}}{\eta_{n'}}
    2^{-L(H_{}-\varepsilon_{n'})}<\frac{1+\varepsilon_{n}}{\eta_{n}}
    2^{-L(H_{}-\varepsilon_{n})}$ (from definition of $L$).
  \item Without loss of generality, assume that (a) holds.  (Otherwise
    we already have that such $F$ has to be in $\varepsilon_r$ measure
    set.) Let $L-|J(F,r)|\ge 2L\varepsilon_n/|\log_2 \theta_r|$. Then,
    \begin{eqnarray*}
      P_A(F,r)
      &=&  P_A(\tilde{F}_r,r) \times
      \prod_{i\not\in J(F,r)} P_A(F[i]|\langle 
      J_1,F,S\rangle,r)\\  
      &\le& P_A(\tilde{F}_r,r) \cdot \theta_r^{2L\varepsilon_n/|\log_2
        \theta_r|}\\ 
      &<&
      \frac{1+\varepsilon_{n'}}{\eta_{n'}}2^{-L(H_{}-\varepsilon_{n'})}
      2^{-2L\varepsilon_n}\\   
      &<&
      \frac{1+\varepsilon_{n}}{\eta_{n}}2^{-L(H_{}-\varepsilon_{n})}
      2^{-2L\varepsilon_n}\\   
%       &<&\frac{1+\varepsilon_n}{\eta_n}2^{-L(H_{}+\varepsilon_n)}.
    \end{eqnarray*}
    We use the inequality $\theta^{1/|\log_2 \theta|}\le 2^{-1}$. 
    In this case $F$ must belong to the set $(K'_r)^c$ 
    of measure less than $\varepsilon_{n'}$.  Hence the set on which $L$
    can violate the bound has measure 
    $<2\varepsilon_{n'}\le \varepsilon_n$.
  \end{enumerate}
\end{enumerate}
\end{proof}

%% TeX-Master=markov.tex

%%  LocalWords:  layerwise semicomputable indices subskeletons
%%  LocalWords:  Breiman

\subsection{Societies and Marriage Lemma}{\label{sub-sec:marraige}}
Once we have determined the filler alphabets and filler probabilities
for \A and \C, we are now in a position to start building the
isomorphism between cylinders from \A and \C which have identical
skeletons. Each cylinder in \A has multiple possible matches in \C and
conversely. We model this as a bipartite graph with the filled-in
skeletons from \A forming the left set of vertices, and those from \C
forming the right set. The presence of an edge represents a potential
match between the corresponding vertices. We obtain this by a minor
variant of Keane and Smorodinsky's marriage lemma, where the variation
is forced by the fact that we have only an approximation of the actual
probabilities of the vertices.

Let us assume we are given two probability space $(\Omega_1,\mu_1)$,
$(\Omega_2,\mu_2)$, with both $\Omega_1$ and $\Omega_2$ finite. A
\emph{society} or a \emph{knowledge relationship} is a map
$f:\Omega_1\to2^{\Omega_2}$ so that for all $X\subseteq\Omega_1$, we
have $\mu_1(X) \le \mu_2 ( f(X) )$ where $f(X)$ is defined in the
natural way. When the underlying probabilities are clear from context,
we denote a society as $f:\Omega_1\leadsto \Omega_2$. Now consider the
undirected knowledge graph constructed out of the knowledge
relationship, with vertices set $\Omega_1\cup\Omega_2$ and edge set
$E=E_1\cup E_1^{-1}$ where $E_1=\{(a,b)\in\Omega_1\times\Omega_2:b\in
f(a)\}$ .  Note that the knowledge graph is bipartite by
definition. Now we define a couple of notions which provides us with
the tools necessary for defining isomorphism:

\begin{definition}[Join of societies]
Given societies $f_i:\Omega_{i,1}\leadsto \Omega_{i,2}$ for
$1\le i\le j$, we define their join $f:\Omega_{1,1}\times
\Omega_{2,1}\times\ldots\Omega_{j,1}
\overset{prod}\longrightarrow\Omega_{1,2}\times   
\Omega_{2,2}\times\ldots\Omega_{j,2}$ as a map $f:\Omega_{1,1}\times
\Omega_{2,1}\times\ldots\Omega_{j,1}\to2^{\Omega_{1,2}\times
\Omega_{2,2}\times\ldots\Omega_{j,2}}$ where $(\omega_1,\omega_2,\ldots \omega_j)
\in f(\nu_1,\nu_2,\ldots \nu_j)$ for $\omega_i\in\Omega_{i,2}$,
$\nu_i\in\Omega_{i,1}$ 
iff $\omega_i\in f_i(\nu_i)$.
\end{definition}

\begin{definition}[$\varepsilon$-robust]
Consider a society $f$ between probability spaces $(\Omega_1,\mu_1)$,
$(\Omega_2,\mu_2)$.  Consider the undirected knowledge graph
$G=V_1\cup V_2 \cup \cdots \cup V_w$ where $V_i$s are connected
components of $G$. Given an $\varepsilon>0$, society $f$ is called
$\varepsilon$-robust if for all $1\le i\le m$, for all $X\subset
V_i\cap \Omega_1$ and for all $Y\subset V_i \cap \Omega_2$, we have:
\begin{align*}
\mu_1(X) (1+\varepsilon) &\le \ \mu_2 (f(X)) ( 1-\varepsilon )\\
\mu_2(Y) (1+\varepsilon) &\le \ \mu_1(f^{-1}(Y)) (1-\varepsilon).
\end{align*}
\end{definition}

It is easy to see that for $\varepsilon>0$, an $\varepsilon$-robust
society is a society.

Note that we only consider proper subsets $X$ and $Y$ in the above
definition, since
$\mu_1(V_i\cap\Omega_1)=\mu_2(V_i\cap\Omega_2)$. This easily follows
from the fact that $f$ and $f^{-1}$ are societies. Also note that a
society $f$ is $\varepsilon$-robust iff the dual of the society
$f^{-1}$ is $\varepsilon$-robust.

A society is \emph{minimal} if the removal of any edge will violate
the condition for a society. In the construction of an isomorphism, we
consider various minimal sub-societies of given societies. Now since
we only have some approximation of probabilities of vertices, we have
to be careful while removing edges from knowledge graph to construct
minimal sub-society. The next lemma shows that it is enough to
consider $\varepsilon$-robust minimal societies for our purpose.

\begin{lemma}
  \label{lemma:robust-suffice}
 Given a society $f$ between probability spaces $(\Omega_1,\mu_1)$,
 $(\Omega_2,\mu_2)$ and a minimal sub-society $g$, there is an
 $\varepsilon > 0$ so that $g$ is $\varepsilon$-robust.
\end{lemma}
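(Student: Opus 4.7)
The plan is to reduce the lemma to the following claim and then exploit finiteness. Call a proper subset $X \subsetneq V_i \cap \Omega_1$ of the $\Omega_1$-side of a connected component \emph{tight} if $\mu_1(X) = \mu_2(g(X))$, and similarly on the $\Omega_2$-side. The first (and main) step is to show:
\begin{quote}
\emph{If $g$ is a minimal society, then no proper non-empty subset of any $V_i \cap \Omega_j$ is tight.}
\end{quote}
Once this is established, for each connected component $V_i$ and each proper non-empty $X \subsetneq V_i \cap \Omega_1$ we have $\mu_1(X) < \mu_2(g(X))$ strictly, and likewise on the $\Omega_2$-side. Since $\Omega_1, \Omega_2$ are finite, there are only finitely many candidate subsets $X, Y$. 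Setting
\[
\ve \;=\; \min_{X,Y} \frac{\mu_2(g(X)) - \mu_1(X)}{\mu_2(g(X)) + \mu_1(X)} \;>\; 0
\]
(with the analogous minimum on the $\Omega_2$-side, and taking the smaller of the two) gives the required $\ve > 0$, since $\mu_1(X)(1+\ve) \le \mu_2(g(X))(1-\ve)$ is algebraically equivalent to $\ve(\mu_1(X)+\mu_2(g(X))) \le \mu_2(g(X)) - \mu_1(X)$.

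The hard part is the no-tight-set claim, which I would prove by contradiction. Suppose $X \subsetneq V_i \cap \Omega_1$ is tight, and write $Y = g(X)$, $X' = (V_i \cap \Omega_1) \setminus X$, $Y' = (V_i \cap \Omega_2) \setminus Y$. Since $V_i$ is connected and $X, X'$ are both non-empty, there must exist an edge $(v,w)$ with $v \in X'$ and $w \in Y$: otherwise every edge from $X'$ would land in $Y'$, and together with the fact that all edges from $X$ land in $Y$, this would disconnect $V_i$ into $(X \cup Y)$ and $(X' \cup Y')$. I claim that the edge $(v,w)$ can be removed without destroying the society property, contradicting minimality of $g$. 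Let $g' = g \setminus \{(v,w)\}$. For any $Z \subseteq \Omega_1$, if $v \notin Z$ or if $w$ has some other neighbor in $Z$, then $g'(Z) = g(Z)$ and the society condition is inherited. The only non-trivial case is $v \in Z$ with $v$ being the unique neighbor of $w$ inside $Z$; here $g'(Z) = g(Z) \setminus \{w\}$, and we must verify $\mu_1(Z) \le \mu_2(g(Z)) - \mu_2(\{w\})$.

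To handle this case, split $Z = (Z \cap X) \cup (Z \setminus X)$. On one hand, apply the society condition to $Z \cup X$ and use $\mu_1(X) = \mu_2(Y)$ together with $g(Z \cup X) = g(Z) \cup Y$; a short inclusion-exclusion computation yields
\[
\mu_1(Z \setminus X) \;\le\; \mu_2(g(Z) \setminus Y).
\]
On the other hand, since $v$ is the unique element of $Z$ adjacent to $w$ and $v \notin X$, no element of $Z \cap X$ is adjacent to $w$, so $g(Z \cap X) \subseteq (g(Z) \cap Y) \setminus \{w\}$, whence
\[
\mu_1(Z \cap X) \;\le\; \mu_2(g(Z) \cap Y) - \mu_2(\{w\}).
\]
Adding these two bounds gives exactly $\mu_1(Z) \le \mu_2(g(Z)) - \mu_2(\{w\})$, as needed. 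This shows $g'$ is still a society, contradicting minimality of $g$, so no proper tight subset exists on the $\Omega_1$-side. The $\Omega_2$-side follows by applying the same argument to the reverse society $g^{-1}$, which is minimal on the same knowledge graph.

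The main obstacle is the edge-removal verification just sketched: one has to locate the correct edge (which depends essentially on connectedness of $V_i$) and then combine the Hall-style bound from the tight set $X$ on one piece of $Z$ with the residual society bound on the complement, so that the ``missing measure'' $\mu_2(\{w\})$ is absorbed exactly once. Everything else, including the passage from strict inequality to a uniform $\ve > 0$, is immediate from finiteness of $\Omega_1, \Omega_2$.
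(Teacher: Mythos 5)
Your proof is correct, and it takes a genuinely different route from the paper's. The paper establishes strictness of the Hall inequalities by appealing to the structural fact that a minimal sub-society is generated by a joining (coupling) $\mu$ on $\Omega_1 \times \Omega_2$ with marginals $\mu_1, \mu_2$ (cited from Petersen), and then runs a one-line counting argument through the coupling: $\mu_1(X) = \sum_{a \in X}\sum_{b \in g(X)}\mu(a,b) < \sum_{a \in g^{-1}(g(X))}\sum_{b \in g(X)}\mu(a,b) = \mu_2(g(X))$, the strict inequality coming from connectedness forcing $X \subsetneq g^{-1}(g(X))$. You instead prove strictness directly from minimality, with no appeal to the coupling representation: assuming a tight set $X$, connectedness produces an edge $(v,w)$ with $v \notin X$, $w \in g(X)$, and your split of an arbitrary $Z$ into $Z\cap X$ and $Z\setminus X$ shows that removing $(v,w)$ preserves the Hall condition, contradicting minimality. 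Both arguments lean on connectedness of $V_i$ in essentially the same place and both close by taking a minimum over the finitely many proper subsets (your explicit $\ve$-formula matches the required algebraic inequality, modulo the trivial edge case $\mu_1(X)=\mu_2(g(X))=0$, which you should exclude from the minimization). Your version is more self-contained and combinatorial; the paper's is shorter given the coupling fact as a black box. Your appeal to $g^{-1}$ being a minimal society for the $\Omega_2$-side is justified by Hall duality for probability measures, which the paper also implicitly assumes.
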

\begin{proof}
We know that the minimal sub-society $g$ is generated by a
joining\footnote{In the literature, the joining operation is also
  known as coupling.}, say $\mu$ - that is, a joint distribution $\mu$
on $\Omega_1 \times \Omega_2$ such that $\mu_1$ and $\mu_2$ are its
marginals(see Chapter 6 of \cite{Petersen89}). Consider the knowledge
graph $G$ for the society $g$. Note that $G$ is a finite graph. Let
$G=V_1 \cup V_2 \cup \cdots \cup V_w$, where $V_i$s are the connected
components. Consider any arbitrary component $V_i$.
%% We know $\mu_1(V_i\cap\Omega_1)=\mu_2(V_i\cap\Omega_2)$ ( since $g$
%% and $g^{-1}$ are societies ).
Let $X\subset V_i\cap \Omega_1$. Now $X\subset g^{-1} (g(A))$. So,
\begin{multline*}
\mu_1(X) = \sum_{a\in X} \mu_1(a) 
= \sum_{a\in X} \sum_{b\in g(X)} \mu(a,b) \\
< \sum_{a\in g^{-1}(g(X))} \sum_{b\in g(X)} \mu(a,b) 
= \sum_{b\in g(X)} \sum_{a\in g^{-1}(g(X))} \mu(a,b) 
= \mu_2(g(X))
\end{multline*}
Using a similar argument, we can show that for $Y\subset V_i\cap
\Omega_2$, $\mu_2(Y)< \mu_1(g^{-1}(Y))$. So there is an
$\varepsilon'>0$ so that $\mu_1(X) (1+\varepsilon) \le \mu_2 (f(X)) (
1-\varepsilon )$ and $\mu_2(Y) (1+\varepsilon) \le \mu_1(f^{-1}(Y))
(1-\varepsilon)$. Let $\varepsilon$ be minimum of all such
$\varepsilon'$ where minimum is taken over all $i,X$ and $Y$.
\end{proof}

Now we quote a variant of the Marriage Lemma.
\begin{lemma}[Marriage Lemma]
\label{lemma:marraige}
For any given society $S$ between $(\Omega_1,\mu_1)$ and
$(\Omega_2,\mu_2)$, any minimal subsociety $R$ has the property that
$|\Omega_2| > |\{w\in\Omega_2:(\exists w_1,w_2\in\Omega_1)
(w_1\not=w_2\wedge w_1Rw \wedge w_2Rw)\}|$.
\end{lemma}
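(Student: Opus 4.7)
The plan is to proceed by contradiction. Suppose every $w \in \Omega_2$ has at least two distinct partners in $\Omega_1$ under $R$. The goal is to exhibit an edge $(a,b) \in R$ whose removal still produces a subsociety, contradicting the minimality of $R$.

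The first step is a structural claim: the bipartite knowledge graph of any minimal subsociety is a forest. Assume, for contradiction, that the graph contains a cycle $a_1 - b_1 - a_2 - b_2 - \cdots - a_k - b_k - a_1$. I would show that removing the edge $(a_1, b_k)$ preserves the subsociety condition. For any $X \subseteq \Omega_1$, the image $f(X)$ under $R \setminus \{(a_1, b_k)\}$ differs from that under $R$ only when $a_1 \in X$ is the unique vertex of $X$ adjacent to $b_k$, which forces $a_k \notin X$. An alternating-path argument along the cycle, comparing the society inequality at $X$ with those at sets obtained by swapping cycle vertices in and out of $X$ (in particular $X \cup \{a_k\}$), shows that the slack in the original society inequality at $X \cup \{a_k\}$ can be transferred back to $X$ after the edge is removed. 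This contradicts minimality, so the knowledge graph of $R$ must be a forest.

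Once the forest structure is established, the second step is a counting argument. Within each connected component $C$, if every $b \in \Omega_2 \cap C$ has degree at least $2$, then summing degrees on the $\Omega_2$ side produces at least $2 |\Omega_2 \cap C|$ edges. But a tree on $|\Omega_1 \cap C| + |\Omega_2 \cap C|$ vertices has exactly $|\Omega_1 \cap C| + |\Omega_2 \cap C| - 1$ edges, so $2|\Omega_2 \cap C| \leq |\Omega_1 \cap C| + |\Omega_2 \cap C| - 1$. Summing over components, invoking the joining description of Lemma~\ref{lemma:robust-suffice} which forces marginal balance $\mu_1(\Omega_1 \cap C) = \mu_2(\Omega_2 \cap C)$ within each component, and combining the resulting cardinality deficit with the measure balance forces at least one $b \in \Omega_2$ to have degree at most $1$, completing the argument.

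The principal obstacle is the cycle-removal step. Verifying the society inequality simultaneously for every $X \subseteq \Omega_1$ after edge deletion requires, for each ``dangerous'' set $X$ (one containing $a_1$ but missing $a_k$), the identification of a nearby set $X'$ obtained by rotating around the cycle whose original society inequality already provides enough slack to absorb the loss of $b_k$ from $f(X)$. The bookkeeping parallels the standard fact that, within a joining, mass can be redistributed along cycles while preserving both marginals --- an idea implicit in Lemma~\ref{lemma:robust-suffice} which I would invoke explicitly. A subsidiary delicate point is the handling of equality (tightness) cases in the society inequality; these are the configurations where the slack vanishes exactly and a finer joining-theoretic argument is needed, rather than a purely combinatorial one.
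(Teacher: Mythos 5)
Your first step --- that the knowledge graph of a minimal society is a forest --- is correct, and your instinct to reach for the joining (coupling) representation from Lemma~\ref{lemma:robust-suffice} is the right one, but the set-by-set ``alternating slack'' argument you sketch is harder to close than the standard transportation-theoretic route you only allude to at the end: a minimal society is the support of a joining of minimal support, and if that support contained a cycle one could shift mass around the cycle while preserving both marginals, zeroing out an edge and contradicting minimality. Using this directly avoids the delicate tightness case-analysis you flag as the ``principal obstacle.''

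The counting step, however, does not establish the stated inequality, and the appeal to measure balance does not repair it. From the forest structure and the hypothesis that every $b \in \Omega_2$ has degree at least $2$, the per-component count $2|\Omega_2 \cap C| \le |\Omega_1 \cap C| + |\Omega_2 \cap C| - 1$ yields $|\Omega_2 \cap C| < |\Omega_1 \cap C|$, and globally one gets $k \le |\Omega_1| - (\text{number of components}) < |\Omega_1|$, where $k$ is the number of $\Omega_2$-vertices of degree at least $2$. That is a bound on $k$ by $|\Omega_1|$, not by $|\Omega_2|$, and these are different quantities. Marginal balance $\mu_1(\Omega_1 \cap C) = \mu_2(\Omega_2 \cap C)$ coexists perfectly well with a cardinality deficit on one side, so invoking it forces nothing: take $\Omega_1 = \{a_1,a_2\}$ each of mass $1/2$, $\Omega_2 = \{b\}$ of mass $1$, with both $a_i$ knowing $b$. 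This is a minimal society (removing either edge violates the condition on the corresponding singleton), it is a tree, the measures balance, and yet $b$ has degree $2$, so $|\Omega_2| > |\{w : \deg w \ge 2\}|$ reads $1 > 1$ and fails. The version your count actually proves, and the version that appears in Keane--Smorodinsky and in Petersen's Chapter~6 (to which the paper defers for the proof), bounds the number of high-degree elements of $\Omega_2$ by $|\Omega_1|$; the statement in the paper appears to have the wrong $\Omega$ on the left-hand side. So the forest-plus-count idea is sound and correctly executed up to the per-component inequality, but the final ``combine cardinality deficit with measure balance'' move is a non sequitur --- there is no route from $k < |\Omega_1|$ to $k < |\Omega_2|$ in general.
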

The proof is exactly analogous to \cite{KeaneSmorodinsky79}, see
Chapter 6 of \cite{Petersen89}.

During the construction of the isomorphism, we compute various minimal
subsocieties. There can be many such minimal subsocieties and
``inconsistent'' choices in different stages may break the
construction.  In the following subsections, we describe a way of
choosing the minimal subsocieties such that the construction goes
through.

%% TeX-Master=markov.tex

%%  LocalWords:  vertices Smorodinsky's iff subsociety Rw
%%  LocalWords:  subsocieties

\subsection{Construction of the isomorphism}{\label{sub-sec:assignment}}
We now have a skeleton $S$ common to two sequences $x \in
\Sigma_A^\infty$ and $y \in \Sigma_C^\infty$, and have defined an
equivalence relation on the fillers for $S$ in $\Sigma_A^*$ and
$\Sigma_C^*$ for a desired level of error. We now inductively build
societies between equivalence classes of fillers of \A and of \C and
use the marriage lemma from the preceding section to define an
isomorphism between \A and \C. A minor technical issue
arises here owing to the fact that we only have approximations of
probabilities of \A and \C during computation of canonical minimal
sub-society.

Given a skeleton $S$ of rank $r$, $r \ge 1$, and length $\ell$, let
$\F(S)$ and $\G(S)$ denote the set of its fillers in \A and \C. Given
$n\ge r$, let $\tilde\F(S,n)$ and $\tilde\G(S,n)$ denote the set of
equivalence classes with respect to the equivalence relation $\sim_n$
( i.e, $\tilde\F(S,n)=\{\tilde F_n:F\in\F(S)\}$ and
$\tilde\G(S,n)=\{\tilde G_n:G\in\G(S)\}$ ) . We denote the
$\varepsilon_n$-robust societies between $\tilde\F(S, n)$ and
$\tilde\G(s,n)$ by induction on $r$: $R_{S, n} : \tilde\F(S, n)
\leadsto \tilde\G(S, n)$ if r is odd , and $R_{S, n} : \tilde\G(S, n)
\leadsto \tilde\F(S, n)$ otherwise. The measure for every $F \in
\tilde\F(S,n)$ is $P_A(F,n)$ and that for every $G\in\tilde\G(S,n)$ is
$P_C(G, n)$.

Fix an $n$. For $r=1$, build a trivial society where each of
$\tilde\F(S,n)$ knows each of $\tilde\G(S,n)$.  Construct a minimal
$\varepsilon_n$-robust sub-society $R_{S,n}$ of the trivial society.

Now we describe the inductive construction: let $r>1$ be even. Let $S$
be a skeleton of rank $r$.  Let $S=S_1\times S_2 \times \dots\times
S_t$ be a rank $r-1$ skeleton decomposition of $S$.  Assume that we
have a procedure to define societies for all $S_i$ ranks at most $r-1$
and to any desired precision. Let us consider $R_{S,n\log
  3t}:\tilde\F(S_i,n\log 3t)\leadsto \tilde\G(S_i,n\log 3t)$ for
$i=1,2,\dots t$. Note that we are using induction only on $r$ and not
$n$ -- for a higher precision, we repeat the induction procedure from
scratch.  Consider their duals $R^*_{S,n\log 3t}:\tilde\G(S_i,n\log
3t)\leadsto\tilde\F(S_i,n\log 3t)$.  Construct the join of societies
$R:\tilde\G(S_1,n\log 3t)\times\tilde\G(S_2,n\log 3t)\times\dots\times
\tilde\G(S_t,n\log 3t)\overset{prod}\longrightarrow\tilde\F(S_1,n\log
3t)\times\tilde\F(S_2,n\log 3t)\times\dots\times \tilde\F(S_t,n\log
3t)$.

Let $\bar\F(S,n)=\tilde\F(S_1,n\log
3t)\times\dots\times\tilde\F(S_t,n\log 3t)$ and
$\bar\G(S,n)=\tilde\G(S_1,n\log
3t)\times\dots\times\tilde\G(S_t,n\cdot \log 3t)$. So,
$R:\bar\G(S,n)\leadsto \bar\F(S,n)$.

\begin{lemma}
  \label{lemma:prod-of-society}
  The $R$ constructed above is $\varepsilon_n$-robust with respect to
  measure $P_C(\cdot,n)$ and $P_A(\cdot,n)$.
\end{lemma}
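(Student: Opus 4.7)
The plan is to verify both robustness inequalities on each connected component of the knowledge graph of $R$. I first analyse the structure of these components, then use the Markov factorization from (\ref{eqn:probmult}) together with the agreement of $P_A$ and $P_C$ on pure-zero cylinders to reduce the measures to approximate product measures, and finally combine the individual $\varepsilon_{n \log 3t}$-robust bounds of the $R_{S_i, n \log 3t}$ into an $\varepsilon_n$-robust bound for the join.

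First I would show that every connected component of the knowledge graph of $R$ is a Cartesian product of connected components of the individual $R_{S_i, n \log 3t}$. By the definition of the join, any edge between $(g_1, \ldots, g_t)$ on the left and $(f_1, \ldots, f_t)$ on the right corresponds to an edge of every factor $R_{S_i, n \log 3t}$, so any path in the join graph projects coordinatewise to a path in each individual graph. Hence it suffices to verify the robustness inequalities on a single product of connected components $V^{(1)} \times \cdots \times V^{(t)}$ on the left and its analogue on the right.

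Next I would use equation (\ref{eqn:probmult}) to factorize the probability of any equivalence class $\tilde{F}_n$ along the rank-$(r-1)$ decomposition $F = F_1 \cdots F_t$: $P_A(\tilde{F}_n)$ equals $\prod_i P_A(\tilde{F_i}_{n \log 3t})$ divided by the probabilities of the intervening zero runs, up to a controlled multiplicative error. The marker lemma of Section 5.1 guarantees $P_A(0^k) = P_C(0^k)$ for every $k$, so the zero-run denominators are identical in \A and \C and cancel when comparing $P_C(\cdot, n)$ to $P_A(\cdot, n)$. Consequently the measures restricted to $V^{(1)} \times \cdots \times V^{(t)}$ and to its image can be treated as product measures up to multiplicative errors of size $O(\varepsilon_{n \log 3t})$.

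With the product-measure reduction in hand, I would establish the $\varepsilon_n$-robustness of the $t$-fold join by a short induction on the number of factors combined, with the precision $n \log 3t$ held fixed throughout. Each step uses a Fubini-type decomposition of the product measure together with the individual robustness of one more $R_{S_i, n \log 3t}$, adding at most one extra factor of $(1 + \varepsilon_{n \log 3t})/(1 - \varepsilon_{n \log 3t})$ to the slack, so that after $t$ steps the total slack is bounded by $[(1 + \varepsilon_{n \log 3t})/(1 - \varepsilon_{n \log 3t})]^t$. The precision $n \log 3t$ is calibrated precisely so that this quantity sits inside $(1 + \varepsilon_n)/(1 - \varepsilon_n)$ uniformly in $t$. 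The symmetric inequality for subsets of the right side follows from the fact that $R^{-1}$ is itself the join of the duals $R_{S_i, n \log 3t}^{-1}$. The main obstacle will be the careful bookkeeping of three overlapping error sources: the approximation error in the tabulated $P_A(\cdot, n)$ and $P_C(\cdot, n)$, the per-factor Markov factorization error in (\ref{eqn:probmult}), and the $t$-fold compounding of the robustness slack from each factor; the choice $n \log 3t$ is what makes all three fit inside $\varepsilon_n$ uniformly in $t$.
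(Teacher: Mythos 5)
The paper itself omits this proof, remarking only that it is ``routine to verify the conditions of robust society hold when we approximate $P_A$ and $P_C$ with $P'_A$ and $P'_C$ and use equation~(\ref{eqn:probmult})'', so I am comparing your sketch against that hint rather than against a written argument. You do pick up the right ingredients: the Markov factorization via~(\ref{eqn:probmult}), the observation that the zero-run denominators agree between \A and \C (this does follow from the Marker Lemma combined with the memory-1 Markov property, so $P_A(0^k)=P_C(0^k)$ for all $k$), and the observation that components of the join graph are Cartesian products of components of the factor graphs (the path-padding needed when the two factor paths have different lengths is worth a sentence, but it works).

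There are two places where the argument as written does not close. First, the ``Fubini-type decomposition'' step is not correct as stated: writing $X=\bigsqcup_{a_2}X_{a_2}\times\{a_2\}$ gives $f(X)=\bigcup_{a_2}f_1(X_{a_2})\times f_2(\{a_2\})$, and this union is \emph{not} disjoint, so summing the factor estimates over $a_2$ overcounts $\nu(f(X))$ and gives an inequality in the wrong direction for a lower bound. The standard way to get the society property for joins is the coupling (joining) characterization, not Fubini, and you would need a quantitative version of that. Second, and more seriously, the inductive claim that combining one more factor ``adds at most one extra factor of $(1+\varepsilon)/(1-\varepsilon)$ to the slack'' is asserted without proof, and it is not true for general societies: take $A_1=B_1=\{1,2\}$ with uniform measures and the complete bipartite knowledge graph. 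Each factor is $\tfrac13$-robust, but the twofold join is only $\tfrac17$-robust, whereas multiplicative slack compounding would predict $\tfrac35$-robustness. So the robustness of a join can degrade strictly faster than your proposed per-factor bound. Any correct proof must use something specific to the societies actually being joined --- the minimality coming from the Marriage Lemma (\ref{lemma:marraige}), the near-matching structure of the connected components that it implies, and the magnitude bounds on the filler-class probabilities from Lemma~\ref{lemma:filler} --- none of which your sketch invokes. You have correctly identified where the three error sources live and you have the right calibration target, but the heart of the lemma, namely why the join remains robust rather than only being a society, is the part you have asserted rather than argued.
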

We omit the proof -- it is routine to verify the conditions of robust
society hold when we approximate $P_A()$ and $P_C()$ with $P'_A()$ and
$P'_C()$ and use equation \ref{eqn:probmult}.

Since $\bar\F(S,n)$ is determined by $J_0(F,n)$ and $\tilde\F(S,n)$ is
determined by $J(F,n)$, the latter is the finer equivalence class. So
we may consider $R:\bar\G(S,n)\leadsto \tilde\F(S,n)$, where each
$\bar\F(S,n)$ is split into multiple $\tilde\F(S,n)$s and the
knowledge relation is extended accordingly. Construct \emph{the}
minimal $\varepsilon_n$-robust sub-society $U$ of $R$. From $U$,
construct $R_{S,n}:\tilde\G(S,n)\leadsto\tilde\F(S,n)$ such that
$R_{S,n}(\tilde\G(S,n))= U(\bar\G(S,n))$ where $\bar\G(S,n)$ is
uniquely determined by the finer equivalence class $\tilde\G(S,n)$.

We construct the canonical $\ve_n$-robust minimal sub-society by progressively
constructing $\ve_i$-robust minimal sub-societies for $1\le i\le n$. For 
$\ve_{i+1}$-robust minimal sub-society, we start with the  $\ve_{i}$-robust
minimal sub-society and keep removing edges from it as long as it remains
an $\ve_{i+1}$-robust society. This can be done in a computable manner,
since checking whether a finite bipartite graph is $\ve_n$-robust is
computable.

%% Now, we elaborate on how we construct the minimal
%% $\varepsilon_n$-robust sub-societies in Algorithm \ref{alg:minimal_society}.
%% \begin{algorithm}[H]
%%   \caption{Computing the minimal robust sub-society}
%%   \label{alg:minimal_society}
%%   \renewcommand{\algorithmicrequire}{\textbf{Input:}}
%%   \begin{algorithmic}[1]
%%     \REQUIRE  $n$, $R:\bar\G(S,n)\leadsto\tilde\F(S,n)$.
%%     \FOR {$i=1; i\le n;i\gets i+1$}
%%       \STATE Construct $E:\bar\G(S,i)\leadsto\tilde\F(S,i)$ from $R$
%%       with measures $P_j(\cdot,i)$ for $j\in\{A,C\}$ 
%%       \COMMENT{by merging various equivalence class in both sides of
%%         the society} 
%%       \WHILE{some edge can be removed}
%% %%     \algstore{construct}
%% %%   \end{algorithmic}
%% %% \end{algorithm}
%% %% \begin{algorithm}[H]
%% %%   \begin{algorithmic}[1]
%% %%     \algrestore{construct}
%%         \STATE cycle through edges of $E$ in dictionary ordering
%%         \IF {an edge $e$ can be removed while the remaining graph
%%           remains $\varepsilon_i$-robust} 
%%           \STATE remove an edge $e$.
%%           \STATE update $R$ by removing all the edges corresponding to
%%           $e$. 
%%         \ENDIF
%%       \ENDWHILE
%%     \ENDFOR
%%     \STATE \textbf{return} $R$
%%   \end{algorithmic}
%% \end{algorithm}
%% Note that the society constructed in step 2 of the algorithm gives us
%% a $\varepsilon_i$-robust society.

For odd $r$, we switch the role of $F$ and $G$.

Now let us describe the construction of the isomorphism: For an $x\in
K'_{r'}\cap K''_{r'}$, let $S_{x,r,i_r}$ denote the skeleton of rank
$r$ which occurs in $x$, where $i_r$ is the current central
co-ordinate.  Given an $n\in\N$, let $\tilde\F_r(x,n)$ denote the
equivalence class of fillers that occur in $x$ corresponding to
$S_r(x)$ with respect to the equivalence relation $\sim_n$. We use a
similar notation for \C where $\mathcal{G}$ replaces $\mathcal{F}$.

For $x\in\A$ such that $x\in K'_{r'}\cap K''_{r'}$ we find a large
enough even $r$ (this $r$ is computable from $r'$) such that
$\forall\bar{G}_r(x,r) \in R_{S_r(x),r}^{-1}( \tilde\F_r(x,r))$, we
have that $\bar\G_r(x,r)[{i_r}]$ is defined (it stabilizes
thenceforth).  Now the shift preserving map $\phi$
%% ( i.e., $\phi(T_1x)=T_2\phi(x)$ )
is so defined that
\begin{equation*}
  (\phi(x))[0] =
 \begin{cases}
   0 & \text{if the block of 0 containing $i_r$ is longer than 1}\\
   \bar\G_r(x,r)[i_r] & \text{otherwise}.
 \end{cases}
\end{equation*}
Note that this definition specifies all the co-ordinates of $\phi(x)$
because the we want it to be shift preserving. 

It is known that if a measure-preserving shift applied to a
Martin-L\"of random $x$ yields a Martin-L\"of random point
\cite{Nan08}, \cite{GHR09b},\cite{Shen82}. The following lemma is a
straightforward extension to Schnorr randoms. 

\begin{lemma}
Suppose $T: X \to X$ is a computable measure-preserving transformation
on a computable probability space $(X, \mathcal{F},P)$. Then the image
of every Schnorr random in $X$ under $T$ is Schnorr random.
\end{lemma}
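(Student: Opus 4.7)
The plan is to argue by contrapositive using preimages of tests. Suppose $x \in X$ is Schnorr random and, for contradiction, that $y = T(x)$ is not Schnorr random. Then there is a Schnorr test $\langle V_n \rangle_{n \in \N}$ with $y \in V_n$ for every $n$. The goal is to pull this test back through $T$ to produce a Schnorr test covering $x$.

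First I would check that $\langle T^{-1}(V_n) \rangle_{n \in \N}$ is a uniformly effectively open sequence. This uses only the fact that $T$ is a computable function between computable probability spaces: the preimage of any effectively open set under a computable map is again effectively open, uniformly in an index for the target set. Since $\langle V_n \rangle$ is uniformly effectively open by assumption, so is $\langle T^{-1}(V_n) \rangle$.

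Next I would verify the measure bounds. Because $T$ is measure-preserving, $P(T^{-1}(V_n)) = P(V_n) \le 2^{-n}$, and moreover $P(T^{-1}(V_n))$ is uniformly computable in $n$ because $P(V_n)$ is (this is the extra condition that distinguishes Schnorr tests from Martin-L\"of tests). Hence $\langle T^{-1}(V_n) \rangle$ is a genuine Schnorr test. Since $T(x) \in V_n$ for every $n$, we have $x \in T^{-1}(V_n)$ for every $n$, contradicting the Schnorr randomness of $x$.

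The only mild subtlety, and the place where Schnorr randomness differs from Martin-L\"of randomness, is the uniform computability of $P(T^{-1}(V_n))$; this is handed to us for free by measure-preservation, so there is no real obstacle. The proof is therefore a direct lift of the Martin-L\"of argument of \cite{Nan08}, \cite{GHR09b}, \cite{Shen82}, with the additional observation that measure-preservation preserves not only the measure bound but also the computability of the measure, so the Schnorr-test condition is preserved under pullback.
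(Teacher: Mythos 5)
Your proof is correct and takes essentially the same route as the paper's one-line justification: the paper notes that pulling a Schnorr layering back through $T^{-1}$ yields a Schnorr layering because $T$ is computable and measure-preserving, and your argument is exactly the dual of this, pulling back a Schnorr test and checking effective openness, the measure bound, and the extra computability-of-measure condition. You spell out the Schnorr-specific point (that $P(T^{-1}(V_n)) = P(V_n)$ preserves uniform computability of the measures, not just the $2^{-n}$ bound) which the paper leaves implicit, so your version is a faithful elaboration rather than a different proof.
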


This follows from the fact that if $\langle U_n \rangle_{n \in \N}$ is
a Schnorr layering of $X$, then so is $\langle T^{-1} U_n \rangle_{n
  \in \N}$, since $T$ is measure-preserving and computable.

%% And hence $Tx$ appears in some layering $K'_{r_1}\cap K''_{r_1}$ --
%% so we can compute $\phi(Tx)[0]$ in a layerwise  manner and so on.

This concludes the description of the algorithm for constructing
$\phi$. 
%% TeX-Master=markov.tex

%%  LocalWords:  layerwise semicomputable

\subsection[Proof that Phi is a layerwise lower semicomputable
  isomorphism]{Proof that $\phi$ is a layerwise lower semicomputable
  isomorphism}{\label{sub-sec:proof}} 
Now we show that $\phi$ is an isomorphism and well-defined n every
Schnorr random element in \A, that $\phi^{-1}$ is well-defined for
every Schnorr random element in \C, and that the candidate isomorphism
$\phi$ is Schnorr layerwise lower semicomputable.

\subsubsection[Phi is isomorphic]{$\phi$ is isomorphic}
We show that we can always find an $r$ sufficiently large to stabilize
the construction of the society, Let us consider the case when for a
given $r$-rank skeleton $S$, $n$ is so large that $R_{S,n}$ stabilizes
(\emph{i.e.}, it remains unchanged for any larger $n$) -- such a $n$
exists due to Lemma \ref{lemma:robust-suffice} and the fact that
$J(\cdot,n)$s are non-decreasing in $n$ and bounded above. Call such
stabilized society $R_S:\bar\G(S)\leadsto\tilde\F(S)$. Then the
following result holds.
\begin{lemma}[Assignment Lemma]
\label{lemma:assignment}
If $x\in\A$ such that $x\in G_{r'}\cap G'_{r'}$ with $x[0]$ not 
contained in a block of $0$ longer than $m$, then there is an even
$r$, computable from $r'$, such that
\begin{enumerate}
\item With respect to the society
  $R_{S_r(x)}:\bar\G(S_r(x))\leadsto \tilde\F(S_r(x))$,
  $R_{S_r(x)}^{-1}(\tilde\F_r(x))$ is a singleton, say,
  $\bar\G_r(x)$.
\item $i_r(x)\in J_0(\bar\G_r(x))$.
\end{enumerate}
\end{lemma}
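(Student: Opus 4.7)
The plan is to show that for Schnorr random $x$, both conditions (1) and (2) eventually hold simultaneously for some even rank $r$ that is effectively computable from the layering index $r'$. The strategy is to bound the $P_A$-measure of each ``bad event'' at rank $r$ by a summable effective quantity and then apply a Schnorr layerwise argument --- refining the layerings produced in Lemmas \ref{lemma:skeleton} and \ref{lemma:filler} --- to conclude that each bad event occurs for only finitely many $r$ along any Schnorr random trajectory.

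Concretely, I would define the bad sets
\[B^{(1)}_r = \{x\in\A : R_{S_r(x)}^{-1}(\tilde\F_r(x))\text{ is not a singleton}\},\]
\[B^{(2)}_r = \{x\in\A : i_r(x)\notin J_0(\bar\G_r(x))\text{ for some }\bar\G_r(x)\in R_{S_r(x)}^{-1}(\tilde\F_r(x))\},\]
and show that both have $P_A$-measure summable in $r$. For condition (1), the Filler Lemma provides upper and lower bounds on each equivalence class $\tilde F\in\tilde\F(S,r)$ that agree up to a factor of the form $(1+\ve_r)/\eta_r$, except on a set of fillers of measure at most $\ve_r$; the analogous bound holds in $\C$. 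Since the total measures match by construction of $\C$ and the individual class weights are nearly uniform, the cardinalities of $\tilde\F(S,r)$ and $\bar\G(S,r)$ are comparable. Applying the Marriage Lemma to the minimal $\ve_r$-robust subsociety $R_{S,r}$ forces the contested equivalence classes in $\tilde\F(S,r)$ (those with more than one pre-image in $\bar\G(S,r)$) to form a strictly proper subset of $\tilde\F(S,r)$, and the $\ve_r$-robustness inequalities combined with near-uniformity of weights then bound their total $P_A$-mass by $O(\ve_r)$, yielding $P_A(B^{(1)}_r) = O(\ve_r)$.

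For condition (2), recall that $J_0(F,r)=\bigcup_{i=1}^{t}J(F_i,r\log 3t)$ collects positions determined by the rank-$(r-1)$ sub-skeleton decomposition $S=S_1\times\cdots\times S_t$. By Filler Lemma part 2(b), each $|J(F_i,r\log 3t)|/L_i \ge 1-\frac{2\ve_{r\log 3t}}{|\log_2 \theta_{r\log 3t}|}$ outside a set of measure at most $\ve_{r\log 3t}$. Summing across sub-skeletons and exploiting stationarity --- the event ``position $i_r(x)$ lies outside $J_0$'' can, via a measure-preserving translation, be reduced to the probability that a fixed central position lies outside $J_0$ --- bounds $P_A(B^{(2)}_r)$ by an expression of the same order, summable in $r$. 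The map $r'\mapsto r$ is obtained by taking the least even $r\ge r'$ for which $\sum_{s\ge r}(P_A(B^{(1)}_s)+P_A(B^{(2)}_s))<2^{-r'}$.

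The main obstacle is converting the purely combinatorial output of the Marriage Lemma --- which yields only a strict cardinality inequality --- into a quantitative measure bound on the contested equivalence classes. This step critically uses the $\ve_r$-robustness of the minimal subsocieties: the inequalities $\mu_1(X)(1+\ve_r)\le\mu_2(f(X))(1-\ve_r)$, together with the near-uniformity of class weights from the Filler Lemma, prevent substantial mass from accumulating on contested classes, since otherwise edges could be further removed without violating the robustness condition, contradicting minimality. Once these measure bounds are in place, the standard Schnorr layering construction --- taking $U_r = B^{(1)}_r \cup B^{(2)}_r$ and selecting $r$ at the first rank where the cumulative measure falls below the threshold set by $r'$ --- completes the effective construction and produces the required even $r$.
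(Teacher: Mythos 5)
The paper itself does not supply a proof of this lemma; it only remarks that the argument is ``similar to the proof of the assignment lemma given in \cite{Petersen89}, where we use the estimates given by Lemma~\ref{lemma:filler}.'' Your high-level plan --- define bad events $B^{(1)}_r$, $B^{(2)}_r$, bound their $P_A$-measures by an effectively computable summable quantity, and threshold the tail to produce the even $r$ computably from $r'$ --- is indeed the Borel--Cantelli/Schnorr-layering structure that an effectivization of Petersen's argument would take, and your treatment of condition~(2) via Filler Lemma 2(b) together with a stationarity averaging is essentially right. So the overall shape of the proposal matches the paper's intended route.

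The gap is exactly where you locate it, but your proposed resolution does not work as stated. You claim the Filler Lemma gives class weights that ``agree up to a factor of the form $(1+\ve_r)/\eta_r$'', and then use this near-uniformity together with the Marriage Lemma count bound to conclude $P_A(B^{(1)}_r)=O(\ve_r)$. But the Filler Lemma's two bounds are not nearly matched: the universal lower bound is $(1+\ve_r)2^{-L(H-\ve_r)}$ while the (except-$\ve_n$-measure) upper bound is $\frac{1+\ve_n}{\eta_n}2^{-L(H-\ve_n)}$ with $n<r$. Their ratio carries a factor of order $\eta_n^{-1}\,2^{L(\ve_n-\ve_r)}$. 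Since the choice of $\langle L_r\rangle$ in Section~\ref{sub-sec:filler} forces $L_r\,\ve_r\to\infty$ and $\ve_n-\ve_r\ge\ve_r$ for $n\le r-1$, this factor diverges with $r$; the filler weights are decidedly non-uniform. Consequently, pairing the Marriage Lemma count $|\{\text{contested}\}|<|\tilde\F(S,r)|$ with $\text{count}\times\text{max weight}$ cannot yield $O(\ve_r)$ --- it yields a bound worse than $1$. Indeed Lemma~\ref{lemma:marraige} in the paper only asserts strictly-proper-subset, which is far too weak on its own to control mass. The missing ingredient, which a full proof along Petersen's lines must supply, is the use of the canonical, consistently-refined minimal $\ve_n$-robust sub-societies across ranks (so that once a class is uniquely matched it remains so under refinement, and the ``newly contested'' mass at each stage is what is being controlled), together with separate handling of the $\ve_n$-measure exceptional fillers in both \A and \C before any counting on the remainder. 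Without this, the claimed $P_A(B^{(1)}_r)=O(\ve_r)$ bound is unjustified, and the summability that drives the rest of your argument does not follow.
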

We omit the proof of this lemma -- it is similar to the proof of the
assignment lemma given in \cite{Petersen89} where we use the estimates
given by lemma \ref{lemma:filler}.

Now we show how the above lemma ensures the existence of the map
$\phi$ for every $x\in G_{r'}\cap G'_{r'}$. If the
co-ordinate $i_r$ is part of a block of $0$ of length at least 2,
then we are done. Otherwise, the above lemma shows that for each $x
\in G_{r'}\cap G'_{r'}$, there is a sufficiently large $r$, computable
from $r'$ such that for all sufficiently large
$n$, $(\bar\G_r(x))_{i_r}$ becomes fixed -- this is defined to be
$\phi(x)[0]$. Let $r_1$ be greater than $n$ and $r$. Since
$R_{S_{r_1}(x),r_1}$ is derived from $R_{S_r(x),n}$ (via the
construction of consistent minimal sub-society), we have that all
$\bar\G_{r_1}(x,r_1)$ which know some $\tilde\F_{r_1}(x,r_1)$ %% (all
%% of them are refinement of $\tilde\F_r(x)$)
have the coordinate $i_{r}$ fixed with same symbol
$(\bar\G_r(x))_{i_r}$. Hence at this $r_1$ we can level off the
inductive construction we can compute $\phi(x)[0]$.

Finally we show that $\phi$ is indeed an isomorphism. The map is by
construction measurable, and shift-invariant. We only need to show
that it is measure-preserving. We use a similar technique as in the
original proof. Consider $x\in\C$ specified by fixing $z$ consecutive
co-ordinates for some $z$. We show that for all $Y \in \Sigma_C^{z}$,
$P_A(\phi^{-1}(Y))\ge P_C(Y)$. Consequently, $\phi$ is
measure-preserving on the algebra $\Sigma_C^z$. This is sufficient,
since elements of $\Sigma_C^z$ over all $z$ generate the
$\sigma$-algebra $\Sigma_C^\infty$.

Let $X=\{ x \in \Sigma_A^\infty \mid x[k \dots
  z+k]=c_{i_k}c_{i_{k+1}}\dots c_{i_{z+k}}\}$. Since both \A and \C
are stationary and $\phi$ is shift preserving, we can assume that
$k=0$. Consider $x\in T_C^{-(k+1)}X$, i.e., the symbols in positions
$-k-1$ to $-1$ match those in corresponding places of $C$. Now,
consider a cylinder $xa$ for $a\in\Sigma_C$.  Clearly, $X=\cup_{a\in
  C}Xa$. Now we use the assignment lemma on cylinder $Xa$ to argue
about the measures.  The assignment %lemma along with Egoroff's
theorem \cite{Royd88} implies for all $x \in G_{r'}\cap G'_{r'}$ there
is an $r_1$ such that $(G_{r'}\cap G'_{r'})^c$ has measure
$\delta_{r'}$ and we can find the assignment for $\phi(x)[0]$ in the
$r_1$ level off the inductive construction.  Note, $\delta_{r'}\to 0$
as $r'\to\infty$. So,
% Note that when we make
% assignment for $(\phi(x))[0]$ in the second case, all its previous places
% in the skeleton has been fixed via $\phi$ due to the fact that we grow
% $J(,r)$ from left to right. 
\begin{multline*}
P_C(x)=\sum_{a\in \Sigma_C}P_C(Xa) \le \sum_{\bar\F_{r_1}(Xa)\in
  R_{S_{r_1}(Xa),r_1}^{-1}(\tilde\G_{r_1}(Xa))} P_A(\bar\F_r(Xa),r_1) 
+ \delta_{r'}\\ 
\le P_A(\phi^{-1}(X)) (1-\varepsilon_{r_1})+\delta_{r'},
\end{multline*}
since the map $\phi$ respects society and we consider
$\varepsilon_{r_1}$ robust societies in level $r_1$.  Since $x \in
G_{s}\cap G'_{s}$ for all $s\ge r'$, we have $P_C(X)\le
P_A(\phi^{-1}(X))$.
% {\bf Changelog: Verify - What is $d_r$?}

\subsubsection[Schnorr Layerwise Lower Semicomputability of Phi]{Schnorr Layerwise Lower Semicomputability of $\phi$}
In this section, we recapitulate the major steps in the construction
of the isomorphic map $\phi$ and show that it is Schnorr layerwise
lower semicomputable. This yields, as a corollary, that it is defined
for every Schnorr random sequence $x \in \A$. We conclude by proving
that $\phi(x) \in \C$ is a Schnorr random as well.

We show that there is a Schnorr layering $\langle K^A_r
\rangle_{r=1}^{\infty}$ of $\A$ such that the following holds. For
every $x \in K^A_r$, there is a central cylinder $x[\;-m_r+1 \dots 0
  \dots m_r-1\;]$ mapped to a central cylinder $y[\;-m_r+1 \dots 0
  \dots m_r+1\;]$ such that $P_A(x[-m_r+1 \dots 0 \dots m_r-1])$ is
approximately $P_C(y[-m_r+1 \dots 0 \dots m_r-1])$.

To see this, note that the Schnorr layering $\langle K'_r \cap K''_r
\rangle_{r=1}^{\infty}$ of \A, where $\langle K'_r
\rangle_{r=1}^{\infty}$ is the Schnorr layering of \A in the Skeleton
Lemma and $\langle K''_r \rangle_{r=1}^{\infty}$ is its Schnorr
layering in the Filler Lemma, has the following property. For every $r
\in \N$ and $x \in G_r \cap G_r'$, there is a central skeleton of $x$
of rank $r$ and length $L_r$, for which every filler $F \in
\Sigma_A^{L_r}$ obeys the probability bounds in the filler lemma.

Similarly, there is a Schnorr layering of \C which has the following
property. For every $r \in \N$ and $y$ in the $r^{\text{th}}$ layer,
there is a central skeleton of $x$ of rank $r$ and length $L_r$, for
which every filler $G \in \Sigma_C^{L_r}$ obeys the probability bounds
in the filler lemma.

Then we create a bipartite graph among the equivalence classes
$\tilde{F}_n$ and $\tilde{G}_n$ of fillers in $\Sigma^{L_r}$ and
$\Sigma_C^{L_r}$, and build the canonical $\varepsilon_n$-robust
minimal subsociety. This is a computable process, since the societies
are finite. The assignment lemma yields us a layerwise lower
semicomputation of the central co-ordinate $\phi(x)[0]$.

Let $T_A$ and $T_C$ be the shifts associated with \A and \C,
respectively. If $x$ is Schnorr random in \A, the computabilty and
measure-preservation of $T_A$ ensure that $T_A^ix$, $i \in \Z$ is also
Schnorr random in \A. Hence for all large enough ranks $r'$, $T_A^ix
\in K^{A}_{r'}$. Noting that $x[i] = (T_A^ix)[0]$ and that $\phi$ is a
factor map, we see that
$$(\phi \circ T_A^i (x))[0] = (T_C^i \circ \phi(x))[0] =
(\phi(x))[i],$$ we see that all co-ordinates $\phi(x)[-m+1 \dots 0
  \dots m+1]$ will be fixed for all large enough ranks $K^A_r$. This
is an iteration over a Schnorr layerwise lower semicomputable
function, hence is Schnorr layerwise lower semicomputable.  For
$\phi^{-1}$, the same argument can be carried out on the dual graph.

Hence the maps $\phi$ and $\phi^{-1}$ thus constructed are Schnorr
layerwise lower semicomputable and can be computed for all Schnorr
random points.

%----------------------------------------------
% The original reason 
%---------------------------------------------
%% Since $t_r(x)$ defined to be the length of the $r$-rank skeleton of
%% $x$ is a $\mu$-integrable test, and $t_r(\phi (x))$ is likewise a
%% $\nu$-integrable test, we have the following.
\begin{lemma}
Let $t_A : \Sigma_A^\infty \to [0,\infty]$ be a Schnorr layerwise
$P_A$-integrable test. Then $t_C' = t_A \circ \phi^{-1}$ is a Schnorr
layerwise $P_C$-integrable test. Conversely, if $t_C : \Sigma_A^\infty
\to [0,\infty]$ be a Schnorr layerwise $P_C$-integrable test. Then
$t_A' = \phi~\circ~t_A$ is a Schnorr layerwise $P_A$-integrable test.
\end{lemma}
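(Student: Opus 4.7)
The plan is to verify the three defining properties of a Schnorr layerwise integrable test for $t_C' = t_A \circ \phi^{-1}$, namely: (i) $P_C$-integrability, (ii) layerwise lower semicomputability, and (iii) existence of a computable sequence of rational step functions converging pointwise with geometrically decaying $L^1$ increments. The converse direction is symmetric, using that the preceding section established both $\phi$ and $\phi^{-1}$ to be Schnorr layerwise lower semicomputable measure-preserving maps on the Schnorr random sets of $\A$ and $\C$ respectively.

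First, integrability follows by change of variables. Since $\phi$ is measure-preserving, for any nonnegative measurable $t_A$,
\begin{equation*}
\int_{\Sigma_C^\infty} (t_A \circ \phi^{-1})\, dP_C = \int_{\Sigma_A^\infty} t_A\, dP_A < \infty,
\end{equation*}
where the equality holds on the set of Schnorr randoms in $\C$, which is a $P_C$-measure $1$ set on which $\phi^{-1}$ is defined. Since $t_A$ is $P_A$-integrable by assumption, $t_C'$ is $P_C$-integrable.

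Second, I would verify layerwise lower semicomputability of $t_C'$ by combining the layerings of $\A$ and $\C$. Let $\langle K^A_n \rangle$ be a Schnorr layering of $\A$ on which $t_A$ is uniformly computable (from below), and let $\langle K^C_n \rangle$ be the Schnorr layering of $\C$ on which $\phi^{-1}$ is uniformly Schnorr layerwise lower semicomputable, as established in Section~\ref{sub-sec:proof}. Given $y \in K^C_n$, we uniformly compute longer and longer central coordinates of $\phi^{-1}(y)$; by measure-preservation of $\phi$, the set $\phi^{-1}(K^A_m) \cap K^C_n$ has $P_C$-measure going to $1-2^{-m}-2^{-n}$, and the intersection layering $K^C_n \cap \phi^{-1}(K^A_m)$ is again a Schnorr layering (since a preimage under a computable measure-preserving map of a Schnorr layering is a Schnorr layering, by the lemma invoked earlier). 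On this intersection $\phi^{-1}(y)$ lies in $K^A_m$, where $t_A$ is computable from below, so $t_A(\phi^{-1}(y))$ is lower semicomputable uniformly in $m,n$. Passing to a single reindexed layering gives layerwise lower semicomputability of $t_C'$.

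Third, and this is where I expect the main obstacle, I need to exhibit a computable sequence of rational step functions $\langle s'_n \rangle$ converging pointwise to $t_C'$ with $\|s'_{n+1} - s'_n\|_1 < 2^{-n}$ in $L^1(P_C)$, as required by Definition~\ref{schnorr_it}. Let $\langle s_n \rangle$ be the given sequence for $t_A$ satisfying $\|s_{n+1} - s_n\|_1 < 2^{-n}$ in $L^1(P_A)$. The naive candidate $s_n \circ \phi^{-1}$ is not a step function on $\Sigma_C^\infty$ because the preimage under $\phi^{-1}$ of an $\A$-cylinder is not generally a $\C$-cylinder. The plan is to fix, for each $n$, a finite partition of the $n$-th layer $K^C_n$ into $\C$-cylinders on which the central coordinates of $\phi^{-1}$ are determined up to some prefix length $m_n$ (this is where the Assignment Lemma and the fact that $\phi$ is Schnorr layerwise lower semicomputable give a computable stopping rule), and on each such cylinder $C$ set $s'_n$ equal to the rational value of $s_n$ on the identified $\A$-cylinder of length at least $n$. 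Outside $K^C_n$, we set $s'_n = 0$. The $L^1$ increment bound then splits into a ``layer-growth'' contribution, controlled by $P_C(K^C_{n+1} \setminus K^C_n) \le 2^{-n}$ and the integrability of $t_A$, and an ``approximation'' contribution, which by measure-preservation equals $\|s_{n+1} - s_n\|_1 \le 2^{-n}$ in $L^1(P_A)$. Choosing the layering $\langle K^C_n \rangle$ with suitably fast measure decay, and the prefix lengths $m_n$ growing fast enough, a standard Borel--Cantelli style bookkeeping gives $\|s'_{n+1} - s'_n\|_1 < 2^{-n}$ as required; the computability of this construction is inherited from the joint computability of the layerings and of $\phi$. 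The converse direction for $t_A \circ \phi$ is entirely parallel, with the roles of $\A,\C$ and $\phi,\phi^{-1}$ interchanged.
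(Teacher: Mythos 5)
Your proposal follows the same three-part outline as the paper's proof (integrability by change of variables, layerwise lower semicomputability, and $L^1$ computability via the witnessing step functions), but you are substantially more careful on the third point and in doing so you correctly flag an issue that the paper's own proof glosses over. The paper simply asserts that $s_1 \circ \phi^{-1}, s_2 \circ \phi^{-1}, \dots$ witnesses the $L^1$ computability of $t_C'$. As you point out, this cannot be taken at face value: each $s_n$ is a rational-valued step function on $\Sigma_A^\infty$, constant on finitely many $\A$-cylinders, but since $\phi$ is only a layerwise computable isomorphism and not a homeomorphism, $\phi$ does not carry $\A$-cylinders to finite unions of $\C$-cylinders. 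Hence $s_n \circ \phi^{-1}$ is generally not a rational-valued step function on $\Sigma_C^\infty$, and Definition~\ref{schnorr_it} requires one. Your repair — on each layer $K^C_n$ use the uniform computability of $\phi^{-1}$ to determine, from a sufficiently long $\C$-cylinder, the $\A$-prefix of $\phi^{-1}(y)$ needed to evaluate $s_n$, and set $s'_n = 0$ off the layer — is the right idea, and the split of $\|s'_{n+1} - s'_n\|_1$ into a layer-growth term (controlled by the decay of $P_C(K^C_n)^c$) and an approximation term (controlled by $\|s_{n+1} - s_n\|_{L^1(P_A)}$ via measure-preservation) is the correct bookkeeping. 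Two small remarks: your appeal to compactness of $K^C_n$ to obtain a uniform prefix length $m_n$ should be made explicit, since without it the ``cylinder partition'' of $K^C_n$ is not of bounded width; and the layer-growth term requires not merely that $P_C(K^C_{n+1}\setminus K^C_n)$ is small but also that $s'_{n+1}$ is bounded (by a computable bound) on that difference set, which holds since $s'_{n+1}$ is a finite rational step function. With those two points acknowledged, your version is a correct, and more honest, rendering of the argument than the paper's one-line assertion.
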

\begin{proof}
The function $t_C' = t_A\phi^{-1}$ is layerwise
lowersemicomputable. Also, $\int t_C' dP_C = \int t_A \circ \phi^{-1}
dP_A$, since $\phi$ is a measure-preserving isomorphism. Hence $\int
t_C' dP_C$ is finite. If $s_1, s_2, \dots$ is the computable sequence
of step functions witnessing the $L^1$ computability of $t_A$, then
$s_1 \circ \phi^{-1}, s_2 \circ \phi^{-1}, \dots$ witnesses the $L^1$
computability of $t_C'$.  Thus $t_C'$ is a Schnorr layerwise
$P_C$-integrable test.

The proof in the converse direction is similar.
\end{proof}

\begin{corollary}
$x \in $\A is Schnorr random if and only if $\phi(x) \in$
  \C is Schnorr random, and $y \in $\C is Schnorr random if
  and only if $\phi^{-1}(y) \in$ \A is Schnorr random.
\end{corollary}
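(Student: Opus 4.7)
The plan is to derive the corollary essentially as a direct consequence of the preceding lemma, using the standard characterization of Schnorr randomness via Schnorr layerwise integrable tests: a point $x$ is Schnorr random in $(X,P)$ if and only if $t(x) < \infty$ for every Schnorr layerwise $P$-integrable test $t$. This characterization lets us transport randomness along $\phi$ by testing via composition with $\phi^{-1}$ (and vice versa), which is exactly what the preceding lemma supplies.

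First, suppose $x \in \A$ is Schnorr random. I would take an arbitrary Schnorr layerwise $P_C$-integrable test $t_C$ and form the test $t_A' = t_C \circ \phi$ on $\A$. By the preceding lemma (applied in the converse direction, reading off the roles of \A and \C), $t_A'$ is a Schnorr layerwise $P_A$-integrable test. Since $x$ is Schnorr random, $t_A'(x) < \infty$, hence $t_C(\phi(x)) < \infty$. As $t_C$ was an arbitrary Schnorr test for $P_C$, we conclude that $\phi(x)$ is Schnorr random in \C. The converse direction (Schnorr randomness of $\phi(x)$ implies that of $x$) is obtained symmetrically, using that $\phi$ is an isomorphism and applying the same argument with $\phi^{-1}$ in place of $\phi$: given a Schnorr test $t_A$ on \A, the composition $t_A \circ \phi^{-1}$ is a Schnorr test on \C by the preceding lemma, and its finiteness at $\phi(x)$ gives the finiteness of $t_A$ at $x$.

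The second biconditional, for $y \in \C$, is then entirely analogous by interchanging the roles of \A and \C, using that both $\phi$ and $\phi^{-1}$ are Schnorr layerwise lower semicomputable (established in the previous subsection) and measure-preserving. I would remark that this also confirms that $\phi$ restricts to a genuine pointwise bijection between the Schnorr random points of \A and the Schnorr random points of \C, which is the final statement promised by the Main Theorem.

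The main potential obstacle is really one of invocation rather than content: one must be careful that the preceding lemma is applicable in both directions (the lemma as stated covers both $t_A \circ \phi^{-1}$ and $t_C \circ \phi$), and that we are indeed working with Miyabe's Schnorr integrable test definition, so that the characterization of Schnorr randomness in terms of finiteness on every such test is the one we are entitled to use. Given the preceding lemma, however, no further analytic estimates or constructions are needed, and the corollary follows in a few lines.
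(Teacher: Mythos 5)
Your proposal is correct and follows essentially the same route as the paper: both arguments transport Schnorr layerwise integrable tests across $\phi$ and $\phi^{-1}$ using the preceding lemma and then invoke the test-based characterization of Schnorr randomness. The only cosmetic difference is that you argue directly (``random implies image random'') while the paper writes the contrapositive (``image fails a test implies the point fails a test''); these are logically identical.
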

\begin{proof}
Let $t_A, t_A', t_C$ and $t_C'$ be as in the previous lemma. If
$t_A(\phi^{-1}(y))=\infty$, then $t_C'(y) = \infty$ implying that $y$
is not Schnorr random in \C.

Conversely, by a similar argument, we see that for $x \in \A$ such
that $\phi(x) \in \C$ is defined, if $\phi(x)$ is not Schnorr
random in \C, then $x$ is not Schnorr random in \A.
\end{proof}

%% TeX-Master=markov.tex

% LocalWords:  layerwise semicomputable Afterwards xa subsociety dP
% LocalWords:  lowersemicomputable Semicomputability semicomputation
%%  LocalWords:  computabilty

\section{Computable isomorphisms}
\label{secn:computable_transformation}
Recall that a \emph{homeomorphism} is a continuous bijection whose
inverse is also continuous. It is known (see \cite{Petersen89},
section 6.5, page 301, excercise 2):
\begin{lemma}
\label{lemma:computable-isomorphism}
Suppose $\X$ and $\Y$ be two Bernoulli systems with the same
entropy. If $\phi: X \to Y$ is a measure-preserving homeomorphism,
then $\mu$ and $\nu$ are permutations of each other.
\end{lemma}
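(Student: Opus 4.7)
The plan is to combine the topological rigidity of shift-commuting homeomorphisms (via the Curtis-Hedlund-Lyndon theorem) with the product structure of Bernoulli measures, to force $\phi$ to be essentially a symbol-to-symbol relabeling. I read the statement as implicitly requiring that $\phi$ be an isomorphism of dynamical systems (so $\phi \circ T = S \circ \phi$); without shift-commutativity the hypothesis is too weak, since $\Sigma_A^{\Z}$ and $\Sigma_B^{\Z}$ are homeomorphic as topological spaces (both Cantor sets) regardless of alphabet sizes.

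First, by the Curtis-Hedlund-Lyndon theorem, the continuous shift-commuting map $\phi$ between the full shifts is a sliding block code: there exist $N \in \N$ and a local rule $\Phi : \Sigma_A^{2N+1} \to \Sigma_B$ with $\phi(x)[i] = \Phi(x[i-N], \ldots, x[i+N])$ for every $i \in \Z$. Applying the theorem to $\phi^{-1}$ yields a companion local rule $\Psi : \Sigma_B^{2M+1} \to \Sigma_A$. Second, the shift-fixed points of $\Sigma_A^{\Z}$ are exactly the $|\Sigma_A|$ constant sequences $\overline{a}$, so since $\phi$ is a shift-commuting bijection, it induces a bijection on fixed points, yielding $|\Sigma_A| = |\Sigma_B| =: k$ together with a distinguished alphabet bijection $\sigma : \Sigma_A \to \Sigma_B$ determined by $\phi(\overline{a}) = \overline{\sigma(a)}$.

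The crucial step is to show that $\Phi$ depends only on the central coordinate, i.e., $\Phi(w_{-N}, \ldots, w_N) = \sigma(w_0)$. Once this is in hand, measure preservation immediately gives, for each $a \in \Sigma_A$,
\[
q_{\sigma(a)} \;=\; \nu\bigl([\sigma(a)]_0\bigr) \;=\; \mu\bigl(\phi^{-1}[\sigma(a)]_0\bigr) \;=\; \mu\bigl([a]_0\bigr) \;=\; p_a,
\]
so that $q$ is exactly the permutation of $p$ induced by $\sigma$, and we are done.

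The main obstacle is establishing the central-coordinate dependence of $\Phi$. The approach I would take is to exploit the Bernoulli independence structure of $\nu$: for any distinct indices $i_1, \ldots, i_r$ and any symbols $b_1, \ldots, b_r \in \Sigma_B$, the events $\{\phi(x)[i_j] = b_j\}$ must be mutually $\mu$-independent. Under the sliding block description, however, these events depend on overlapping windows of $x$, which under the product measure $\mu$ share common coordinates. This forces strong algebraic constraints on $\Phi$ relative to $p$. Combined with the invertibility provided by the sliding-block inverse $\Psi$ (which rules out the information loss typical of non-injective block codes, such as the XOR-type examples that preserve uniform Bernoulli measure but have no finite-window inverse), a finite combinatorial induction on $N$ reduces $\Phi$ to depending only on its central coordinate. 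An equivalent route is via uniqueness of the one-letter generator: among finite partitions of $X$ with entropy $H(p)$ whose $T$-translates are mutually $\mu$-independent, the one-letter partition is essentially unique, and the sliding-block structure of $\phi^{-1}$ forces $\phi^{-1}(\beta)$ (for $\beta$ the one-letter partition of $Y$) to inherit this property, hence to coincide modulo relabeling with the one-letter partition of $X$.
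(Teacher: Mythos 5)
The paper offers no proof here---it cites Petersen, Section 6.5, Exercise 2---so there is nothing to compare your argument against except the claim itself. Your setup is fine (reading ``measure-preserving homeomorphism'' as a topological conjugacy, so that Curtis--Hedlund--Lyndon applies and the bijection on shift-fixed points gives $\lvert\Sigma_A\rvert=\lvert\Sigma_B\rvert$ and the relabeling $\sigma$). But the ``crucial step'' is false: a shift-commuting, measure-preserving homeomorphism between Bernoulli shifts need not be a one-block code, and having a sliding-block inverse does not rule this out. Take $\Sigma_A=\Sigma_B=\{0,1\}^2$ with the uniform Bernoulli measure, write each symbol as a pair $(u,v)\in\{0,1\}^2$, and set $\phi(x)[i]=\bigl(x[i]^{1},\ x[i]^{2}\oplus x[i+1]^{1}\bigr)$. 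This is a shift-commuting two-block code, an involution (hence a homeomorphism whose inverse is itself a two-block code), and it preserves $\mu$; yet the local rule genuinely depends on $x[i+1]$. The lemma's conclusion holds here only trivially (the two measures coincide), so the example does not threaten the lemma, but it does refute your intermediate step. It also refutes your ``equivalent route'': $\phi^{-1}$ of the one-letter partition of $Y$ is then an i.i.d.\ generating partition of $X$ of entropy $H(p)$ that is not the one-letter partition, so the claimed uniqueness fails.

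The correct use of CHL is local, at the fixed points, and never requires $\Phi$ to be one-block. With window radius $N$, the block code carries $[a^{2n+1}]_{-n}^{\,n}$ into $[\sigma(a)^{2(n-N)+1}]_{-(n-N)}^{\,n-N}$, so measure preservation gives $p_a^{2n+1}\le q_{\sigma(a)}^{2(n-N)+1}$, and letting $n\to\infty$ yields $p_a\le q_{\sigma(a)}$. The symmetric estimate applied to $\phi^{-1}$ gives the reverse inequality, so $p_a=q_{\sigma(a)}$ for every $a\in\Sigma_A$, which is exactly the statement that $\mu$ and $\nu$ are permutations of each other.
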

Since total computable functions are continuous, it follows that only
trivial computable isomorphisms exist between two computable
dynamical systems. This partly justifies layerwise lower
semicomputability as a notion of appropriate power for constructing
the isomorphism between the systems.

\section{Comparison of the results}
Ornstein showed that a process satisfying a weaker condition,
\emph{viz.} a \emph{finitely determined system} with entropy $H$ is
isomorphic to some Bernoulli process with entropy $H$. Thus elements
of a much broader class of processes are isomorphic to Bernoulli
systems of equal entropy, the latter being intuitively the \emph{most
  random systems} possible. Several ``deterministic'' dynamical
systems have been shown to be finitely determined (for a survey, see
Ornstein\cite{Ornstein13}), leading to the interpretation that all
such systems are, intuitively, encodings of the most random possible
systems. However, to demonstrate this, we need isomorphic maps which
are termed \emph{stationary codes}. \cite{Shields98} Rudolph has
proved a characterization of systems \emph{finitarily isomorphic} to
each other \cite{Rudolph81}, showing that if we restrict our codes to
\emph{finitary codes}, there are weakly Bernoulli systems and finitely
determined systems which cannot be isomorphic to any Bernoulli system
with the same entropy.

We show that computable mixing Markov systems of equal entropy have a
\emph{layerwise lower semicomputable} isomorphism. Thus the targets of
our isomorphisms are not intuitively as random as that of the Ornstein
construction. However, our code has a stronger computability property
than Ornstein's original construction and the maps in Rudolph's
characterization of finitary isomorphism.

Rudolph's characterization of systems finitarily isomorphic to
Bernoulli systems uses the notion of \emph{conditional block
  independence}. We leave open whether there is a similar
characterization of computable systems which are layerwise isomorphic
to a computable mixing Markov system.
\section*{Acknowledgments}
The authors thank Mathieu Hoyrup and Jason Rute for valuable
discussions and anonymous referees for their suggestions.

%% TeX-Master=markov.tex

%%  LocalWords:  layerwise semicomputability Ornstein finitarily
%%  LocalWords:  finitary semicomputable Ornstein's Hoyrup

\bibliography{main,dim,dimrelated,random,fair001,biblio}
\end{document}